%% file: NNCTD_arxiv.tex
\documentclass{article}
\usepackage{times}
\usepackage{authblk}
\usepackage{fullpage}
\usepackage{hyperref}
\usepackage{url}
\usepackage{array}
\usepackage{subcaption}
\usepackage{amsmath,amsthm,amssymb}
\usepackage{complexity}
\usepackage{tikz}
\usepackage{makecell}
\tikzset{
	circ/.style = {circle,draw,fill,inner sep=1.3pt}
}
\tikzset{
	wcirc/.style = {circle,draw,inner sep=1.3pt}
}

\newtheorem{theorem}{Theorem}
\newtheorem{lemma}[theorem]{Lemma}

\newenvironment{proofthmvc}{\noindent{\textit{Proof of Theorem~\ref{thm:fpt-vc}.}}}{\hfill \qed}
\newenvironment{proofthmtd}{\noindent{\textit{Proof of Theorem~\ref{thm:fpt-td}.}}}{\hfill \qed}

\newtheorem{reduction rule}{Reduction Rule}

\newcommand{\prob}{\textsc{N-NCTD}} 
\newcommand{\probplus}{\textsc{N-NCTD$^+$}}
\newcommand{\ETH}{\textsf{ETH}}
\newcommand{\OO}{\mathcal{O}}
\newcommand{\true}{\texttt{True}}
\newcommand{\false}{\texttt{False}}

\newcommand{\defproblem}[3]{
\noindent
\fbox{
  \begin{minipage}{0.96\textwidth}
  \begin{tabular*}{\textwidth}{@{\extracolsep{\fill}}lr} #1 \\ \end{tabular*}
  {\textbf{Input:}} #2  \\
  {\textbf{Question:}} #3
  \end{minipage}
  }
}

\title{Non-Clashing Teaching in Graphs: \\ Algorithms, Complexity, and Bounds\thanks{An extended abstract of this paper will appear in the proceedings of ICLR 2026.}}

\author[a]{Sujoy Bhore}
\author[b]{Liana Khazaliya}
\author[c]{Fionn {Mc~Inerney}}

\affil[a]{Department of Computer Science \& Engineering,\protect\linebreak
Indian Institute of Technology Bombay, Mumbai, India}
\affil[b]{Algorithms and Complexity Group, TU Wien, Vienna, Austria}
\affil[c]{Telef\'{o}nica Scientific Research, Barcelona, Spain}

\date{}

\begin{document}

\maketitle

\begin{abstract}
Kirkpatrick et al.~[ALT 2019] and Fallat et al.~[JMLR 2023] introduced
non-clashing teaching and
proved that it is the most efficient batch machine teaching model satisfying the collusion-avoidance benchmark established in the seminal work of Goldman and Mathias~[COLT 1993].
Recently,
(positive) non-clashing teaching
was thoroughly studied for balls in graphs, yielding numerous algorithmic and combinatorial results. In particular, Chalopin et al.~[COLT 2024] and Ganian et al.~[ICLR 2025] gave an almost complete picture of the complexity landscape of the positive variant, showing that it is
tractable only for restricted graph classes due to the non-trivial nature of the problem and concept class. 

In this work, we consider
(positive) non-clashing teaching
for closed neighborhoods in graphs. This concept class is not only extensively studied in various related contexts, but it also exhibits broad generality, as any finite binary concept class can be equivalently represented by a set of closed neighborhoods in a graph.
In comparison to the works on balls in graphs, we provide improved algorithmic results, notably including \FPT\ algorithms for more general classes of parameters, and we complement these results by deriving stronger lower bounds. Lastly, we obtain combinatorial upper bounds for wider classes of graphs.
\end{abstract}

\section{Introduction}
In classic machine learning models such as PAC-learning, a learner must reconstruct a concept~$C$ from a concept class~$\mathcal{C}$ using a random sample of examples. In contrast, in machine teaching, the examples are selected from the domain by a teacher whose goal is to aid the learner. Specifically, the teacher wants to minimize the number of labeled examples shown to the learner such that the learner can uniquely reconstruct the concept. Machine teaching is a core area in learning theory, with several models proposed in the literature~\cite{SM91,GK95,GM96,Balbach,ZLH11,GRSZ16,GCS17,MCV19}. Moreover, it has many applications in machine learning, such as sample compression~\cite{KuWa,DFSZ14,MSWY15,ChChMoWa}, inverse reinforcement learning~\cite{HLMCA16,BN19}, training data security~\cite{MZ15,ZZW18},  and human-robot interaction~\cite{TC09,ACYT12}.

In this work, we focus on the \emph{non-clashing teaching model}~\cite{KSZ19,FKS23}, a batch teaching framework where the teacher presents a set of examples to the learner simultaneously. This model has gained significant attention in recent years, since it was shown~\cite{KSZ19,FKS23} to require the fewest examples among batch teaching models that adhere to the Goldman-Mathias collusion-avoidance criterion~\cite{GM96}, a fundamental principle that prevents 
teacher-learner collusion (e.g., through encoding). 

In \emph{non-clashing teaching}, given a finite binary concept class~$\mathcal{C}$, a teacher maps each concept $C\in \mathcal{C}$ to a labeled set of examples $T(C)$ that is consistent with $C$ and called the teaching set of~$C$. The teaching map~$T$ must satisfy the non-clashing condition: for each pair of distinct concepts $C,C'\in \mathcal{C}$, at least one of the examples in $T(C)\cup T(C')$ must be consistent with exactly one of the two concepts $C$ and $C'$. If $T$ satisfies this condition and $\max_{C\in \mathcal{C}}|T(C)|=k$, then $T$ is a \emph{non-clashing teaching map} (NCTM) of size~$k$ for $\mathcal{C}$. Notably, the teacher needs to present at most $k$ examples (specifically, $T(C)$) for the learner to uniquely reconstruct any concept $C\in \mathcal{C}$. The non-clashing teaching dimension of~$\mathcal{C}$, denoted by NCTD$(\mathcal{C})$, is the smallest size of an NCTM for~$\mathcal{C}$.

Machine teaching is often studied in the setting where the teacher is restricted to presenting only positively-labeled examples. Research in this setting dates back to 1980~\cite{Angluininf,Angluinjcss} and has applications in, e.g., recommendation systems~\cite{SPK00}, grammatical inference~\cite{SO94,Denis01}, and bioinformatics~\cite{WDM06,YJSS08}.
Notably, when $T(C)$ may only consist of positively-labeled examples for all $C\in \mathcal{C}$, the corresponding \emph{positive non-clashing teaching dimension} of $\mathcal{C}$, denoted by NCTD$^+(\mathcal{C})$, has been extensively studied~\cite{KSZ19,FKS23,ChalopinCIR24,GKMR25}.

The initial works~\cite{KSZ19,FKS23} on non-clashing teaching studied~the relationship between the NCTD and the \emph{VC-dimension} (VCD), due to its value in learning theory. A key open question of theirs is whether there is a concept class $\mathcal{C}$ such that $\text{NCTD}(\mathcal{C})>\text{VCD}(\mathcal{C})$. It was also shown that deciding whether NCTD$^+(\mathcal{C})\leq 1$ (NCTD$(\mathcal{C})\leq 1$, resp.) is \NP-complete when~$\mathcal{C}$ is a set of open neighborhoods in a graph $G$~\cite{KSZ19}. Although unstated, by those results, unless the Exponential Time Hypothesis (\ETH)\footnote{Roughly, the \ETH\ states that $n$-variable and $m$-clause \textsc{3-SAT} cannot be solved in $2^{o(n+m)}$ time.} fails, there is no algorithm running in $2^{o(f(k)\cdot \sqrt{|V(G)|})}$ time for either problem for any computable function $f$ of the solution size $k$.\footnote{Their results are proven by reductions from a matching problem in graphs, and they prove that problem is \NP-hard by a reduction from $n$-variable and $m$-clause \textsc{3-SAT} that produces a graph with $\Theta(n\cdot m)$ vertices.}

Any finite binary concept class can be equivalently represented by a set of balls in a graph $G$ (see Fig.~\ref{fig:sets}), where the ball of radius $r\in \mathbb{N}$ centered at the vertex $v\in V(G)$ is the~set $B_r(v):=\{u \mid \text{dist}(u,v)\leq r,~u\in V(G)\}$~\cite{ChChMc}. This was one of two main motivations for the recent works~\cite{ChalopinCIR24,GKMR25} on \emph{non-clashing teaching} for balls in graphs. The other was that this is a natural and fundamental concept class that has been investigated for related topics like sample compression~\cite{ChChMc}, and its VC-dimension is well-studied~\cite{ChEsVa,BouTh,DuHaVi}.

Given a graph $G$, a set of balls $\mathcal{B}$ in $G$, and $k\in \mathbb{N}$, \textsc{Non-Clash} asks whether NCTD$^+(\mathcal{B})\leq k$. If $\mathcal{B}$ contains every ball in $G$, then this problem is known as \textsc{Strict Non-Clash}.\footnote{For clarity, we name the problems as in~\cite{GKMR25}, but note that they are for the positive~variant.} 
When~$G$ is restricted to the class of split or (co-)bipartite graphs, \textsc{Strict Non-Clash} remains \NP-hard~\cite{ChalopinCIR24}, even if~$k=2$ for the former~\cite{GKMR25}. \textsc{Strict Non-Clash} can be solved in $2^{2^{\OO(\mathtt{vc}(G))}}\cdot |V(G)|^{\OO(1)}$ time, where $\mathtt{vc}(G)$ is the vertex cover number of $G$, and this is tight under the \ETH~\cite{ChalopinCIR24}. 
This \FPT\footnote{A problem with input size $n$ is fixed-parameter tractable (\FPT) parameterized by parameter $p\in \mathbb{N}$ if it admits an $f(p)\cdot n^{\OO(1)}$-time algorithm (i.e., an \FPT\ algorithm) for some computable function $f$.} algorithm was improved upon by considering a more general parameter and problem, as it was shown that \textsc{Non-Clash} is \FPT\ parameterized by the vertex integrity of $G$~\cite{GKMR25}. As \textsc{Non-Clash} is \W[1]-hard\footnote{A problem that is \W[1]-hard parameterized by $p$ is not \FPT\ parameterized by $p$ unless $\W[1]=\FPT$.} parameterized by the feedback vertex number, pathwidth, and solution size $k$ combined~\cite{GKMR25}, these results give a nearly complete picture of its complexity landscape. There is a $2^{\OO(|V(G)|\cdot d\cdot k \cdot \log{|V(G)|})}$-time algorithm for \textsc{Non-Clash}, and unless the \ETH\ fails, there is no $2^{o(|V(G)|\cdot d\cdot k)}$-time algorithm, even for \textsc{Strict Non-Clash}, where $d$ is the diameter of $G$~\cite{GKMR25}. Finally, (near-)optimal NCTMs are known for balls if $G$ is a tree, cycle, cactus or interval graph~\cite{ChalopinCIR24}. 

\subsection{Our Contributions}

We explore non-clashing teaching for balls in graphs from a combinatorial and computational perspective, providing a comprehensive understanding of the problem. As discussed above, the problem is inherently algorithmically intractable. With the goal of obtaining improved algorithmic results for larger graph classes under more general parameters, we study both variants of non-clashing teaching for closed neighborhoods (balls of radius $1$) in graphs:

\smallskip

\defproblem{\prob}{A graph $G$, a set $\mathcal{B}$ of closed neighborhoods in $G$, and a positive integer $k$.}{Is NCTD$(\mathcal{B})\leq k$?}

\smallskip

\defproblem{\probplus}{A graph $G$, a set $\mathcal{B}$ of closed neighborhoods in $G$, and a positive integer $k$.}{Is NCTD$^+(\mathcal{B})\leq k$?}

We emphasize that this is as general as studying finite binary concept classes. In fact, the aforementioned representation of any finite binary concept class $\mathcal{C}\subseteq 2^V$ by a set of balls in a graph $G$, holds for a set of closed neighborhoods in $G$. Indeed, $V(G)=V\cup \{x_C \mid C\in \mathcal{C}\}$, the $x_C$ vertices form a clique, $x_C$ is adjacent to $v\in V$ if and only if $v\in C$, and $\mathcal{B}=\{N[x_C] \mid C\in \mathcal{C}\}$. See Fig.~\ref{fig:sets}. It is also motivated by the many~works that have considered the \emph{VC-dimension} of \emph{closed neighborhoods} in graphs~\cite{HW87,ABC95,KKRUW97,BLLPT15,DuHaVi,Ducoffe21,ChChMc,CCDV24}. See~\cite{FGMT25} for a list of references on other well-studied graph-based concept classes.

\begin{figure}[h!]
\centerline{\scalebox{0.99}{
\begin{tikzpicture}
\node (t) at (5,1) {$\{2, 5\}$};
\node (f) at (2,1) {$\{1, 4\}$};
\node (n) at (8,1) {$\{2, 3, 5\}$};
\node (1) at (1,0) {$1$};
\node (2) at (3,0) {$2$};
\node (3) at (5,0) {$3$};
\node (4) at (7,0) {$4$};
\node (5) at (9,0) {$5$};
\node (text) at (9.7,1) {$\leftarrow$ clique};
\draw (t) -- (2)
(f) -- (4)
(f) -- (1)
(n) -- (2)
(n) -- (3)
(n) -- (5)
(t) -- (5)
; 
\node (FY) at (5, 1) {
	\tikz {\draw[color = gray, rounded corners] (5, 1) rectangle ++(8, .7);}
};
\end{tikzpicture}}}
\caption{Graph $G$ for a binary concept class $\mathcal{C}=\{\{1, 4\}, \{2, 5\}, \{2, 3, 5\}\}$ and $\mathcal{B} = \{N[C]\}_{C\in \mathcal{C}}$.}
\label{fig:sets}
\end{figure}
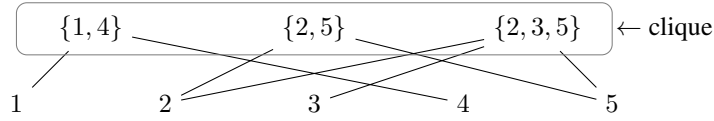
In Section~\ref{sec:exact}, we present improved algorithmic upper and lower bounds, which match in the case of \probplus, unlike in the case of \textsc{Non-Clash}. Specifically, no algorithmic lower bound was previously known for the NCTD of balls in graphs, however, we prove that, unless the \ETH\ fails, there is no $2^{o(f(k)\cdot|V(G)|)}$-time algorithm for any computable function~$f$ (Theorem~\ref{thm:genhard}). This also dramatically improves upon the only previously known lower bound for the NCTD of a concept class, which is that, unless the \ETH\ fails, no $2^{o(f(k)\cdot \sqrt{|V(G)|})}$-time algorithm exists~\cite{KSZ19}. Further, our lower bound is almost tight due to the $2^{\OO(k\cdot|V(G)|\cdot \log{|V(G)|})}$-time algorithm that can be inferred from~\cite{GKMR25}. We then show that \probplus\ can be solved in $2^{\OO(|E(G)|)}$ time (Theorem~\ref{thm:exact-upper}), and unless the \ETH\ fails, this is tight, i.e., there is no $2^{o(|E(G)|)}$-time algorithm (Theorem~\ref{thm:poshard}); in fact there is even no $2^{o(f(k)\cdot (|V(G)|+|E(G)|))}$-time algorithm for any computable function $f$. This is also a new strong lower bound for \textsc{Non-Clash}. 

In Section~\ref{sec:fpt}, we provide two highly non-trivial algorithms to prove that \probplus\ is \FPT\ parameterized by the \emph{treedepth} of $G$ (Theorem~\ref{thm:fpt-td}) and \prob\ is \FPT\ parameterized by the \emph{vertex cover number} of $G$ (Theorem~\ref{thm:fpt-vc}). These are two well-established structural graph parameters that have been considered in, e.g., Bayesian network learning~\cite{KP15,HK23} and other areas of AI~\cite{GPSS20,EGHK23}. Notably, we again achieve positive results that were not obtained in the setting of balls in graphs as \emph{treedepth} is a more general parameter than \emph{vertex integrity},\footnote{For any problem on a graph $G$, if it admits an \FPT\ algorithm parameterized by the treedepth of $G$, then it admits an \FPT\ algorithm parameterized by the vertex integrity of $G$, but not the other way round.\label{fn:treedepth}} and no \FPT\ algorithm parameterized by a structural graph parameter was known for any concept class when the teacher may use negatively-labeled examples.

Finally, in Section~\ref{sec:graph-classes}, we obtain combinatorial bounds for larger graph classes than those in~\cite{ChalopinCIR24}. Exploiting forbidden minors, we prove that, if $\mathcal{B}$ is a set of closed neighborhoods in a planar graph $G$, then NCTD$^+(\mathcal{B})\leq 7$ (Theorem~\ref{thm:planar-pos}) and NCTD$(\mathcal{B})\leq 5$ (Theorem~\ref{thm:planar}). Employing geometric arguments, we prove that, if $\mathcal{B}$ is a set of closed neighborhoods in a unit square graph $G$, then NCTD$^+(\mathcal{B})\leq 4$ (Theorem~\ref{thm:usi}). See Table~\ref{tab1} for a summary of our results and the literature. 

\begin{table}[h!]
\setlength{\tabcolsep}{0.3em}
\renewcommand{\arraystretch}{1.3}
\caption{Summary of our results and the literature. \NP-h, CN, $V$, $E$, [1], and [2] denote \NP-hard, closed neighborhoods, $V(G)$, $E(G)$, \cite{GKMR25}, and \cite{ChalopinCIR24}, respectively.}
\centering
\scalebox{.99}{
\begin{tikzpicture}
\node (table) [inner sep=0pt] {
\begin{tabular}{c|c|c|c|c}
  & {\bf\textsc{NCTD$^+$} Balls} & {\bf\textsc{NCTD$^+$} CN }& {\bf\textsc{NCTD} Balls} & {\bf\textsc{NCTD} CN}\\
  \hline
  \makecell{\bf Solution size $\mathbf{k}$} & \makecell{\NP-h if $k\!=\!1$ (Th.~\ref{thm:poshard})}& \makecell{\NP-h if $k\!=\!1$ (Th.~\ref{thm:poshard})}  & \makecell{\NP-h if $k\!=\!1$ (Th.~\ref{thm:genhard})} & \makecell{\NP-h if $k\!=\!1$ (Th.~\ref{thm:genhard})} \\\hline
  \makecell{\bf Vertex cover}& \makecell{\FPT\ $[1]$} & \makecell{\FPT\ (Th.~\ref{thm:fpt-td})} & ? & \makecell{\FPT\ (Th.~\ref{thm:fpt-vc})} \\
  \makecell{ \bf Vertex integrity}& \makecell{\FPT\ $[1]$} & \makecell{\FPT\ (Th.~\ref{thm:fpt-td})} & ? & ? \\
   \bf Treedepth& ? & \makecell{\FPT\  (Th.~\ref{thm:fpt-td})} & ? & ? \\
   \bf Treewidth& \makecell{\W[1]-hard $[1]$} & ? & ? & ? \\
  \hline
  \makecell{ \bf Alg.~up.~bound} & \makecell{$2^{\OO(k\cdot d \cdot |V| \cdot \log{|V|})}$ $[1]$} & \makecell{$2^{\mathcal{O}(|E|)}$ (Th.~\ref{thm:exact-upper})} & \makecell{$2^{\OO(k \cdot d \cdot |V| \cdot \log |V|)}$ $[1]$} & \makecell{$2^{\OO(k \cdot |V| \cdot \log |V|)}$ $[1]$} \\
  \makecell{ \bf Alg.~low.~bound}& \makecell{$2^{o(k\cdot d \cdot|V|)}$ [1]} & \makecell{$2^{o(f(k)\cdot(|V|+|E|))}$ (Th.~\ref{thm:poshard})} & \makecell{$2^{o(f(k)\cdot |V|)}$ (Th.~\ref{thm:genhard})} & \makecell{$2^{o(f(k)\cdot |V|)}$  (Th.~\ref{thm:genhard})}  \\\hline
  \makecell{ \bf Planar} & \makecell{Unbounded by cycle $[2]$} & \makecell{$\leq 7$ (Th.~\ref{thm:planar-pos})}  & $\leq 615$ [2] & \makecell{$\leq 5$  (Th.~\ref{thm:planar})} \\
  \makecell{ \bf Unit square} & \makecell{Unbounded by cycle $[2]$} &  \makecell{$\leq 4$ (Th.~\ref{thm:usi})} & $\leq 615$ [2] &  \makecell{$\leq 4$ (Th.~\ref{thm:usi}) }\\
\end{tabular}
};
\draw [rounded corners=.5em] (table.north west) rectangle (table.south east);
\end{tikzpicture}}
\label{tab1}
\end{table}

\section{Improved Algorithmic Upper and Lower Bounds}\label{sec:exact}

In this section, we first prove an algorithmic lower bound for \prob\ that is stronger than the previous one from the literature, and then tight algorithmic bounds for \probplus. 
We begin by introducing problem-specific terminology. Let $G$ be a graph, $\mathcal{B}$ a set of closed neighborhoods in $G$, and $T$ a teaching map for $\mathcal{B}$. If, for all $B\in \mathcal{B}$, $T(B)\subseteq B$, then $T$ is a \emph{positive teaching map} for~$\mathcal{B}$. In either case, $T$ is \emph{non-clashing} for $\mathcal{B}$ if, for all distinct $N[u],N[v]\in \mathcal{B}$, there exists $w\in T(N[u])\cup T(N[v])$ such that $w\in N[u]\cup N[v]$ and $w\notin N[u]\cap N[v]$; we say that $w$ \textit{distinguishes} $N[u]$ and $N[v]$. For all $n\in \mathbb{Z^+}$, we set $[n]:=\{1,\ldots,n\}$.

From the exact exponential algorithm for \textsc{Non-Clash} in~\cite{GKMR25}, it can be observed that \prob\ can be solved in $2^{\OO(k \cdot |V(G)| \cdot \log |V(G)|)}$ time. In the next theorem, we provide an improved lower bound, via a reduction from \textsc{3-SAT}, that almost matches this upper bound. However, we first need the following lemma, which will be useful for the proof of that theorem. Note that, given a graph $G$, two distinct vertices $u,v\in V(G)$ are \textit{false twins} if and only if $N(u)=N(v)$.

\begin{lemma}\label{lem:twins}
Let $G$ be a graph with $4$ pairwise false twins $u_1,\ldots,u_4$ with $N[u_1],\ldots,N[u_4]\in \mathcal{B}$. Then, for any NCTM $T$ of size $1$ for $\mathcal{B}$, there exists $i\in [4]$ such that $T(N[u_i])=\{u_i\}$.
\end{lemma}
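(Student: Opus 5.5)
Since $u_1,\ldots,u_4$ are pairwise false twins, they are pairwise non-adjacent and share $N(u_i)=N$. Hence $N[u_i]=N\cup\{u_i\}$, and the symmetric difference of $N[u_i]$ and $N[u_j]$ for $i\neq j$ is exactly $\{u_i,u_j\}$. So only $u_i$ or $u_j$ can distinguish $N[u_i]$ from $N[u_j]$. By the non-clashing condition, for every pair $i\neq j$ we need $u_i\in T(N[u_i])\cup T(N[u_j])$ or $u_j\in T(N[u_i])\cup T(N[u_j])$. Here $T(\cdot)$ is a labeled example set; we only track which vertex it contains, since any example whose vertex lies in the symmetric difference distinguishes the two sets.

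I will argue by contradiction and assume no $i$ has $T(N[u_i])=\{u_i\}$. Since $|T(N[u_i])|\le 1$, either $T(N[u_i])$ is empty or it is a single vertex $w_i\neq u_i$. Consider the pair $(i,j)$. The only useful examples are $w_i$ and $w_j$, so we need $w_i=u_j$ or $w_j=u_i$ (the cases $w_i=u_i$ and $w_j=u_j$ are excluded). Define a directed graph $D$ on $[4]$ with an arc $i\to j$ whenever $w_i=u_j$. Each vertex of $D$ has out-degree at most $1$, because $T(N[u_i])$ contains at most one vertex. The requirement says that every pair $\{i,j\}$ is covered by an arc in at least one direction.

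There are $\binom{4}{2}=6$ pairs to cover, but $D$ has at most $4$ arcs, and a single arc covers only one pair. This is a contradiction, so some $i$ has $T(N[u_i])=\{u_i\}$. The same count shows that $3$ twins would not suffice: a directed $3$-cycle covers all three pairs. This explains why $4$ is needed, and the counting step is the crux.

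The one point I would double-check is the negative-example case. An example of $u_j$ labeled negatively in $T(N[u_i])$ is consistent with $N[u_i]$, since $u_j\notin N[u_i]$, and it does distinguish the pair. This is still covered by the arc $i\to j$, and the argument does not depend on labels. I would also confirm that twins have no other distinguishing vertices; this holds because each $u_i$ is adjacent to none of the others.
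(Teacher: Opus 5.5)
Your proof is correct and uses the same counting argument as the paper. The symmetric difference of $N[u_i]$ and $N[u_j]$ is $\{u_i,u_j\}$, so if no teaching set equals $\{u_i\}$, each of the four teaching sets distinguishes at most one of the $\binom{4}{2}=6$ pairs. Your digraph formulation and explicit handling of empty teaching sets repackage that argument slightly more carefully.
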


\begin{proof}
If no NCTM of size $1$ for $\mathcal{B}$ exists, then the statement trivially holds. So, let $T$ be an NCTM of size $1$ for $\mathcal{B}$. When only considering $N[u_1],\ldots,N[u_4]$ from $\mathcal{B}$, for any $i\in [4]$ and $v\in T(N[u_i])$, 
\begin{itemize}
    \item if $v=u_i$, then $v$ only distinguishes $N[u_i]$ and $N[u_j]$ for all $j\in [4]\setminus \{i\}$;
    \item if $v=u_j$ for some $j\in [4]\setminus \{i\}$, then $v$ only distinguishes $N[u_i]$ and $N[u_j]$;
    \item if $v\notin $\{$u_1,\ldots,u_4\}$, then $v$ does not distinguish $N[u_i]$ and $N[u_j]$ for any $j\in [4]\setminus \{i\}$.
\end{itemize}

Toward a contradiction, suppose that $T(N[u_i])\neq \{u_i\}$ for all $i\in [4]$. By the above analysis, for all $i\in [4]$, the sole vertex in $T(N[u_i])$ distinguishes $N[u_i]$ and at most one other closed neighborhood in $\{N[u_1],\ldots,N[u_4]\}\setminus \{N[u_i]\}$. As there are $4(3)/2=6$ such pairs that need to be distinguished, $T$ does not satisfy the non-clashing condition for some pair in $\{N[u_1],\ldots,N[u_4]\}$, a contradiction. Thus, there exists $i\in [4]$ such that $T(N[u_i])=\{u_i\}$.
\end{proof}

\begin{theorem}\label{thm:genhard}
    Unless the \ETH\ fails, \prob\ cannot be solved in $2^{o(f(k)\cdot |V(G)|)}$ time for any computable function $f$.
\end{theorem}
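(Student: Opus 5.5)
We reduce from \textsc{3-SAT} with $n$ variables and $m$ clauses; by the Sparsification Lemma we may assume $m=\OO(n)$. We build, in polynomial time, an instance $(G,\mathcal{B},k=1)$ of \prob\ with $|V(G)|=\OO(n+m)$ such that the formula is satisfiable if and only if NCTD$(\mathcal{B})\leq 1$. Since $k=1$ is constant, $f(k)$ is a constant. Hence a $2^{o(f(k)\cdot|V(G)|)}$-time algorithm would decide \textsc{3-SAT} in $2^{o(n+m)}$ time, contradicting the \ETH.

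\textbf{Variable gadgets.} For each variable $x_i$ we add a small set of vertices whose closed neighborhoods are in $\mathcal{B}$. These vertices must be forced into exactly two kinds of size-$1$ teaching maps, one encoding $x_i=\true$ and one encoding $x_i=\false$. Lemma~\ref{lem:twins} is the main forcing tool. Take $4$ pairwise false twins $u_1,\ldots,u_4$ with $N[u_j]\in\mathcal{B}$. Then some $u_j$ must be taught by itself. In fact, each of the six twin pairs needs a distinguishing example, and only self-teaching covers several pairs at once. Therefore essentially two or three of the twins must be self-taught, and the remaining ones have their single example pinned inside the twin group. This uses up their budget, so their examples cannot help elsewhere.

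\textbf{Encoding truth values.} We will use this rigidity as follows. For each variable we put literal vertices $\ell_i^{T},\ell_i^{F}$ and a twin group attached to them. The construction is arranged so that the one remaining free vertex of the gadget must distinguish itself from a neighbor. It can do so only by choosing one of two examples, either $\ell_i^{T}$ or $\ell_i^{F}$. That choice is the truth assignment of $x_i$.

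\textbf{Clause gadgets.} For each clause $c_j$ we add a vertex $c_j$, with $N[c_j]\in\mathcal{B}$, adjacent to the literal vertices it contains. We also add a partner vertex $c'_j$ whose closed neighborhood differs from $N[c_j]$ only on vertices tied to the literals of $c_j$. With $k=1$, the only way to distinguish $N[c_j]$ from $N[c'_j]$ is for the literal example already used by some variable gadget to lie in the symmetric difference. This happens exactly when a literal of $c_j$ is set true. Negative examples are allowed, which gives flexibility: an example $w\notin B$ also distinguishes. The design must make sure that negative examples provide no shortcuts.

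\textbf{Correctness and main obstacle.} The forward direction is routine. Given a satisfying assignment, we write down $T$ explicitly and check all pairs. We group these checks by pairs inside a gadget, pairs between gadgets of the same type, and pairs between different types. Pairs of vertices that are far apart in $G$ are distinguished automatically by any example from their own neighborhood. The main obstacle is the converse, together with controlling all the unintended pairs. We must show that in every NCTM of size $1$ each variable gadget commits to exactly one literal, and that no clause pair can be separated any other way. This means ruling out that some $T(N[c_j])$ or $T(N[c'_j])$ separates the pair by itself. To handle this, we plan to replace $c_j$ and $c'_j$ with groups of four false twins. Lemma~\ref{lem:twins} combined with the six-pairs counting then exhausts their own examples, so the literal examples are the only ones left to distinguish the clause pair. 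The counting in the proof of Lemma~\ref{lem:twins} will be reused to show the budget is tight. All gadgets have constant size, so $|V(G)|=\OO(n+m)$.
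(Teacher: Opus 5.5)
Your plan has a genuine gap at the clause gadget, and it comes from a misreading of the non-clashing condition. For two concepts $B,B'\in\mathcal{B}$, only examples in $T(B)\cup T(B')$ can witness that they are distinguished. So the pair $N[c_j],N[c'_j]$ can be separated only by an example in $T(N[c_j])\cup T(N[c'_j])$. The literal vertex in a variable gadget's teaching set is irrelevant to this pair, even if it lies in the symmetric difference. Turning $c_j,c'_j$ into twin groups so that ``the literal examples are the only ones left'' therefore cannot work. Whether the clause pair is distinguished would not depend on the assignment, so satisfiability is never encoded.

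The coupling has to sit in pairs made of one clause concept and one variable concept. In the paper, $T(N[v_i])$ is forced to be $\{t_i\}$ or $\{f_i\}$. The literal vertex for the \emph{falsifying} value of $x_i$ in $C_j$ is made adjacent to $c_j$. Hence $T(N[v_i])$ distinguishes $N[v_i]$ from $N[c_j]$ exactly when the value it encodes satisfies $C_j$. In the completeness direction, $N[c_j]$ is taught by $c_{j,i}\in N[c_j]\cap N[v_i]$ for a satisfying variable $x_i$. This separates $N[c_j]$ from every $N[v_{i'}]$ with $i'\neq i$, and $T(N[v_i])$ handles the remaining pair. For soundness, every vertex of $N[c_j]$ lies in one of the three variable neighborhoods of $C_j$. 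Every vertex lying in all three of them is in $N[c_j]$. So a falsified clause cannot be handled by a single example.

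The variable gadget is also missing: ``the construction is arranged so that\ldots'' states a goal, not a mechanism. You also overstate what Lemma~\ref{lem:twins} gives. It guarantees only one self-taught twin. With exactly one, the other three can teach each other cyclically ($T(N[u_2])=\{u_3\}$, $T(N[u_3])=\{u_4\}$, $T(N[u_4])=\{u_2\}$). With three self-taught twins, the fourth is unconstrained within the group.

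The paper needs just one self-taught $v_i^p$ and then uses the chain $N[v_i^p]\subsetneq N[v_i^\star]\subsetneq N[v_i]$. An example distinguishing two nested sets must lie in their difference \emph{whatever its label}. This is what neutralizes negative examples, and it forces $T(N[v_i])\subseteq N[v_i]\setminus N[v_i^\star]$. The clique vertices left in that difference are then excluded by the dummy $v_0$, whose own difference set consists only of clique vertices. Without a device of this kind, you have no argument that every size-$1$ NCTM commits each variable to $t_i$ or $f_i$.
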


\begin{proof}
    Given an instance $\varphi$ of $n$-variable and $m$-clause \textsc{3-SAT}, we construct a graph $G$ as follows:
    \begin{enumerate}
    
       \item for each variable $x_i$ and clause $C_j$ in $\phi$, there is a clause vertex $c_j$, a variable vertex $v_i$, and two literal vertices $t_i$ and $f_i$ that are both adjacent to $v_i$;
        
       \item there is an additional dummy variable vertex $v_0$ that has no adjacent literal vertices;       

       \item for all $i\in \{0\}\cup [n]$, there is a set of vertices $\mathcal{V}_i=\{v_i^0, v_i^1, v_i^2, v_i^3, v_i^4\}$ and a special vertex $v_i^{\star}$, where $v_i^{\star}$ is adjacent to each vertex in $\mathcal{V}_i$, and each vertex in $\mathcal{V}_i\cup \{v_i^{\star}\}$ is adjacent to $v_i$;
        
       \item for all $i\in [n]$, if the literal $x_i$ ($\overline{x}_i$, resp.) is in the clause $C_j$ in $\phi$, then the literal vertex $f_i$ ($t_i$, resp.) is adjacent to  the clause vertex $c_j$ and there is a vertex $c_{j,i}$ adjacent to $v_i$, $c_j$, $v_i^{\star}$, and each vertex in $\mathcal{V}_i$;

       \item the set $\{c_j\}_{j\in [m]}\cup \{v_i\}_{i\in \{0\}\cup [n]}$ forms a clique.
    \end{enumerate}

        \begin{figure}[ht!]
\centering
        \scalebox{1.04}{\input{fig/the_graph_with_TM}}
	\caption{Example of the graph $G$ constructed in the proof of Theorem~\ref{thm:genhard} with $\varphi = (\overline{x_1}\vee {x_2}\vee \overline{x_3}) \wedge (\overline{x_2}\vee {x_3}\vee {x_4})$ in the \textsc{3-SAT} instance. The two bold rectangles together form a clique. The edges going from vertices in one of the two large dashed rectangles to vertices in the other are omitted for legibility. Only the closed neighborhoods of filled vertices are in~$\mathcal{B}$.}
        \label{fig:red2}
\end{figure}

    This concludes the construction of $G$ (see Fig.~\ref{fig:red2}). We set $\mathcal{B}$ to contain the closed neighborhood of each vertex in $\bigcup_{j\in [m]}\{c_j\}$ and $\bigcup_{i\in \{0\}\cup[n]}\left(\{v_i, v_i^{\star}\}\cup \{v_i^p\}_{p\in [4]}\right)$ (note that, for all $i\in \{0\}\cup [n]$, $N[v_i^0]
    \notin\mathcal{B}$). Thus, from $\varphi$, the reduction outputs the \textsc{N-NCTD} instance $(G, \mathcal{B}, 1)$. 

    To show the correctness of the reduction in the first direction, let $T$ be an NCTM of size $1$ for $\mathcal{B}$. 
    By Lemma~\ref{lem:twins}, for all $i\in \{0\}\cup [n]$, there exists $p\in [4]$ such that $T(N[v_i^p])=\{v_i^p\}$, say, w.l.o.g., $T(N[v_i^1])=\{v_i^1\}$.
    Hence, for all $i\in \{0\}\cup[n]$, $T(N[v_i^{\star}])\subset \mathcal{V}_i$ since $N[v_{i}^1]\subset N[v_i^{\star}]$. Therefore, for all $i\in \{0\}\cup [n]$, $T(N[v_i])\subset (N[v_i]\setminus N[v_i^{\star}])$ since $N[v_i^{\star}]\subset N[v_i]$. As $T(N[v_0])$ is not adjacent to any literal vertices, then it must consist of a clause or variable vertex. As the clause and variable vertices form a clique, for all $i\in [n]$, $T(N[v_i])$ cannot consist of a clause or variable vertex, and hence, $T(N[v_i])\subset\{t_i, f_i\}$. We extract a variable assignment $\tau\colon \{x_1, \ldots, x_n\} \rightarrow \{\true, \false\}$ as follows: if $T(N[v_i])=\{t_i\}$, then $\tau(x_i):=\true$, and otherwise, $\tau(x_i):=\false$.

    Toward a contradiction, suppose that $\tau$ is not a satisfying assignment for $\varphi$. Then, there exists a clause $C_j$ in $\varphi$ that is not satisfied by $\tau$. Let $x$, $x'$, and $x''$ be the $3$ variables in $C_j$ in $\varphi$ (and the corresponding variable vertices in $G$). By the construction of $G$ and the definition~of~$\tau$, as $C_j$ is not satisfied, $T(N[x]),T(N[x']),T(N[x''])\subset N[c_j]$. Thus, as $T$ is non-clashing for $N[c_j]$ and each of $N[x]$, $N[x']$, and $N[x'']$, then $T(N[c_j])$ consists of a vertex $u\in V(G)$, where either \emph{(1)} $u\in N[c_j]$, $u\notin N[x]$, $u\notin N[x']$, and $u\notin N[x'']$, or \emph{(2)} $u\notin N[c_j]$, $u\in N[x]$, $u\in N[x']$, and $u\in N[x'']$. No such vertex $u$ exists, contradicting that $T$ is an NCTM for $\mathcal{B}$. So, $\tau$ is a satisfying assignment for~$\varphi$.
    
    For the reverse direction, let $\tau\colon \{x_1, \ldots, x_n\}\rightarrow \{\true, \false\}$ be a satisfying assignment of the given \textsc{3-SAT} formula $\varphi$.
    We define a teaching map $T$ for $\mathcal{B}$ as follows. For all $i\in [n]$, if $\tau(x_i)=\true$, then $T(N[v_i]):=\{t_i\}$, and otherwise, $T(N[v_i]):=\{f_i\}$. For all $i\in \{0\}\cup [n]$ and $p\in [4]$, $T(N[v^p_i]):=\{v^p_i\}$. For all $i\in \{0\}\cup [n]$, $T(N[v_i^{\star}])=\{v_i^0\}$. For all $j\in [m]$, there exists $i\in [n]$ such that $\tau(x_i)$ satisfies $C_j$, and we set $T(N[c_j]):=\{c_{j,i}\}$. Finally, $T(N[v_0]):=\{v_1\}$.

    We prove that $T$ is an NCTM for $\mathcal{B}$. For all $i\in \{0\}\cup [n]$ and $p\in [4]$, $T(N[v_i^p])=\{v_i^p\}$ distinguishes $N[v_i^p]$ and $N[w]$
    for all $w\in V(G)$ such that $N[w]\in \mathcal{B}\setminus \{N[v_i], N[v_i^{\star}]\}$.
    For all $i\in \{0\}\cup [n]$ and $p\in [4]$, $T(N[v_i])$ distinguishes $N[v_i]$ and $N[v_i^p]$, as well as $N[v_i]$ and $N[v_i^{\star}]$, as $T(N[v_i])\subset N[v_i]$, $T(N[v_i]) \cap N[v_i^p]=\emptyset$, and $T(N[v_i]) \cap N[v_i^{\star}]=\emptyset$. For all $i\in \{0\}\cup [n]$ and $p\in [4]$, $T(N[v_i^{\star}])$ distinguishes $N[v_i^{\star}]$ and $N[v_i^p]$ since $T(N[v_i^{\star}])\subset N[v_i^{\star}]$ and $T(N[v_i^{\star}]) \cap N[v_i^p]=\emptyset$.
    Thus, all pairs of closed neighborhoods containing $N[v_i^p]$ for some $i\in \{0\}\cup [n]$ and $p\in [4]$ are distinguished.
    For all $i\in \{0\}\cup [n]$, $N[v_i^{\star}]$ and any other closed neighborhood in $\mathcal{B}$ are distinguished since $T(N[v_i^{\star}])\subset N[v_i^{\star}]$, but $T(N[v_i^{\star}])$ is not a subset of any closed neighborhood in $\mathcal{B}$ not already covered above.

    For all distinct $i\in [n]$ and $q\in \{0\}\cup [n]$, $T(N[v_i])\subset N[v_i]$ and $T(N[v_i])\cap N[v_q]=\emptyset$, and so, $N[v_i]$ and $N[v_q]$ are distinguished. 
    For all distinct $j,\ell \in [m]$, $T(N[c_j])\subset N[c_j]$ and $T(N[c_j])\cap N[c_{\ell}]=\emptyset$, and so, $N[c_j]$ and $N[c_{\ell}]$ are distinguished.
    For all $i\in [n]$ and $j\in [m]$ such that $c_{j,i}\in T(N[c_j])$, $\tau(x_i)$ satisfies the clause $C_j$ in $\varphi$, and so, $T(N[v_i])\cap N[c_j]=\emptyset$, resulting in $N[v_i]$ and $N[c_j]$ being distinguished by $T(N[v_i])$. For all $i\in [n]$ and $j\in [m]$ such that $c_{j,i}\notin T(N[c_j])$, $N[v_i]$ and $N[c_j]$ are distinguished by $T(N[c_j])$. Lastly, for all $j\in [m]$, $T(N[c_j])$ distinguishes $N[c_j]$ and $N[v_0]$, completing the case analysis. Thus, $T$ is an NCTM~for~$\mathcal{B}$.

    Finally, if \prob\ can be solved in $2^{o(f(k)\cdot |V(G)|)}$ time for a computable function~$f$, then \textsc{3-SAT} can be solved in $2^{o(n+m)}$ time as $f(k) = \mathcal{O}(1)$ and $|V(G)|=\mathcal{O}(n+m)$, contradicting the \ETH.
\end{proof}

A similar reduction proves the following lower bound, which we then prove is tight under the \ETH.

\begin{theorem}\label{thm:poshard}
    Unless the \ETH\ fails, \probplus\ cannot be solved in $2^{o(f(k)\cdot(|V(G)|+|E(G)|))}$ time for any computable function $f$.
\end{theorem}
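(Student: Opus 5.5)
We prove Theorem~\ref{thm:poshard} by a linear-size reduction from $n$-variable, $m$-clause \textsc{3-SAT} to \probplus\ with $k=1$, modelled on the proof of Theorem~\ref{thm:genhard}. The point of the change is to obtain $|V(G)|+|E(G)|=\OO(n+m)$. The clique on clause and variable vertices in that proof is the only source of quadratically many edges, so we drop it. We then exploit the fact that positivity makes distinguishing pairs cheap in a sparse graph.

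The basic observation is this. For a positive map $T$ of size $1$, a vertex $w\in T(N[a])$ distinguishes $N[a]$ and $N[b]$ if and only if $w\notin N[b]$. Hence $N[a]$ and $N[b]$ clash exactly when $T(N[a])\subseteq N[b]$ and $T(N[b])\subseteq N[a]$.

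We also need a positive analogue of Lemma~\ref{lem:twins}. Let $u_1,\dots,u_4$ be pairwise false twins with all $N[u_p]\in\mathcal{B}$. Any common neighbour of two twins lies in both of their closed neighbourhoods, and false twins are non-adjacent. So only $u_i$ itself can distinguish $N[u_i]$ from $N[u_j]$. Every pair therefore needs one endpoint to pick itself, which forces $T(N[u_p])=\{u_p\}$ for at least three indices $p$.

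\textbf{Construction.} For each variable $x_i$ we create:
\begin{itemize}
\item a vertex $v_i$ with literal neighbours $t_i$ and $f_i$;
\item four false twins $v_i^1,\dots,v_i^4$, each adjacent to $v_i$ and to a vertex $v_i^\star$;
\item a vertex $v_i^\star$ adjacent to $v_i$, to $v_i^1,\dots,v_i^4$, to a private vertex $v_i^0$, and to every clause-related vertex that is adjacent to $v_i$.
\end{itemize}
For each clause $C_j$ we create a vertex $c_j$. Whenever $x_i$ occurs in $C_j$, we make $c_j$ adjacent to $v_i$ and to the falsifying literal vertex ($f_i$ if $x_i\in C_j$, $t_i$ if $\overline{x}_i\in C_j$). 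We also add a vertex $c_{j,i}$ adjacent to $c_j$, $v_i$ and $v_i^\star$.

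We let $\mathcal{B}$ consist of $N[c_j]$, $N[v_i]$, $N[v_i^\star]$ and $N[v_i^p]$ for $p\in[4]$. Each clause contributes $\OO(1)$ vertices and edges and each variable contributes $\OO(1)$ plus its occurrences, so the instance has size $\OO(n+m)$. An algorithm running in $2^{o(f(k)(|V|+|E|))}$ time with $k=1$ would therefore refute the \ETH.

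\textbf{Forward direction.} Some twin, say $v_i^1$, has $T(N[v_i^1])=\{v_i^1\}$. Since $N[v_i^1]=\{v_i^1,v_i,v_i^\star\}\subseteq N[v_i^\star]$, the set $T(N[v_i^\star])$ must avoid $N[v_i^1]$. Next, $N[v_i^\star]\subseteq N[v_i]$ by construction: every neighbour of $v_i^\star$ other than $v_i^0$ is adjacent to $v_i$ or equals $v_i$. We would ensure this holds, possibly by also joining $v_i^0$ to $v_i$.

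Then $T(N[v_i])$ must avoid $N[v_i^\star]$, which leaves only $t_i$ or $f_i$. This choice defines the assignment $\tau$. Now suppose $C_j$ is falsified by $\tau$. Then each of its three variables has $T(N[v])$ equal to the falsifying literal vertex, which lies in $N[c_j]$. Every vertex of $N[c_j]$ lies in some $N[v]$ of its variables: $c_j$ is adjacent to all three, each literal vertex lies in $N[v_i]$, and each $c_{j,i}$ is adjacent to $v_i$. Hence $T(N[c_j])$ lies inside some $N[v]$, and that pair clashes, a contradiction.

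\textbf{Reverse direction.} Given a satisfying $\tau$, we set:
\begin{itemize}
\item $T(N[v_i])$ to the literal vertex of the true literal, that is, $t_i$ if $\tau(x_i)=\true$ and $f_i$ otherwise;
\item $T(N[v_i^p])=\{v_i^p\}$ for $p\in[4]$;
\item $T(N[v_i^\star])=\{v_i^0\}$;
\item $T(N[c_j])=\{c_{j,i}\}$ for some variable $x_i$ satisfying $C_j$.
\end{itemize}
We then check all pairs using the clash criterion above. This verification is where we expect the main obstacle to lie. Pairs involving two clause vertices sharing a variable, or a clause vertex and $v_i^\star$ (now adjacent to $c_j$ through $c_{j,i}$), must each have a witness outside the other neighbourhood. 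This may force small adjustments, such as making $v_i^0$ private to $v_i^\star$ or replacing $c_{j,i}$ by a private pendant structure.

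The fallback is to add $\OO(1)$ private vertices per clause and per variable so that every chosen teaching vertex lies in as few members of $\mathcal{B}$ as possible. This keeps the instance linear while every forcing step above remains valid.
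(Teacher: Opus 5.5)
Your reduction is sound and takes a different route from the paper's. As written, though, it is not yet a proof. One edge is left optional, the reverse direction you flag as the main obstacle is never checked, and one contemplated ``adjustment'' would break the argument.

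\textbf{The forward direction needs two edges.} Your step $T(N[v_i])\subseteq\{t_i,f_i\}$ needs both $N[v_i^\star]\subseteq N[v_i]$ and $N[v_i]\setminus N[v_i^\star]=\{t_i,f_i\}$. So you must include the edge $v_iv_i^0$. You must also keep $v_i^\star c_j$ as a \emph{direct} edge, as your construction bullet literally says, not merely a path through $c_{j,i}$. Without $v_iv_i^0$, the vertex $v_i^0$ separates $N[v_i^\star]$ from $N[v_i]$, and $T(N[v_i])=\{c_{j,i}\}$ is no longer excluded. Without $v_i^\star c_j$, we get $c_j\in N[v_i]\setminus N[v_i^\star]$. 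In particular, ``making $v_i^0$ private to $v_i^\star$'' is not an option. Your list for $N[c_j]$ should also note that $v_i^\star\in N[v_i]$.

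\textbf{The reverse direction then works with no fallback.} With these edges, your teaching map works as stated. By your clash criterion, which is valid for positive maps of any size, it suffices to locate the pairs with $T(N[a])\subseteq N[b]$. Within $\mathcal{B}$:
\begin{itemize}
\item $v_i^p$ lies only in $N[v_i^p],N[v_i],N[v_i^\star]$;
\item $v_i^0$ lies only in $N[v_i^\star],N[v_i]$;
\item $c_{j,i}$ lies only in $N[c_j],N[v_i],N[v_i^\star]$;
\item the literal vertex chosen for $v_i$ lies in $N[v_i]$, and in $N[c_j]$ only when $C_j$ has a literal on $x_i$ that $\tau$ falsifies.
\end{itemize}
So the only candidate clashes are $\{v_i^p,v_i\}$, $\{v_i^p,v_i^\star\}$, $\{v_i^\star,v_i\}$, $\{c_j,v_i^\star\}$ and $\{c_j,v_i\}$. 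In each case the reverse inclusion fails. The literal vertex lies outside $N[v_i^p]\cup N[v_i^\star]$, and $v_i^0\notin N[v_i^p]\cup N[c_j]$. If $T(N[c_j])=\{c_{j,i}\}$, then $x_i$ satisfies $C_j$, so $T(N[v_i])\not\subseteq N[c_j]$. Clause pairs sharing a variable are not even candidates, since $c_{j,i}\notin N[c_{j'}]$ for $j'\neq j$.

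\textbf{Comparison with the paper.} The paper does not recycle the gadget of Theorem~\ref{thm:genhard}. It uses a single vertex $v_i'$ adjacent to $v_i$ and to the $c_{j,i}$'s, so $N[v_i]\setminus N[v_i']=\{t_i,f_i\}$ and $T(N[v_i'])=\{v_i'\}$ suffices. It keeps $c_j$ non-adjacent to the variable vertices and instead attaches a pendant $c_j'$. For a falsified clause, the only vertices of $N[c_j]$ outside the three variable neighbourhoods are $c_j$ and $c_j'$. These are ruled out by $N[c_j']=\{c_j,c_j'\}\subset N[c_j]$.

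Your choice to put $c_j$ inside $N[v_i]$ removes the need for the pendant. In exchange it forces the edge $v_i^\star c_j$, which in turn forces the private witness $v_i^0$. With $T(N[v_i^\star])=\{v_i^\star\}$, the pair $N[v_i^\star],N[c_j]$ would clash whenever $T(N[c_j])=\{c_{j,i}\}$.

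The twins $v_i^1,\dots,v_i^4$ and your positive twin lemma are dead weight. In the positive setting, $N[v_i^\star]\subseteq N[v_i]$ alone forces $T(N[v_i])\subseteq N[v_i]\setminus N[v_i^\star]$, whatever $T(N[v_i^\star])$ is, so you can delete them. Both constructions have $\OO(n+m)$ vertices and edges with $k=1$, which is all the \ETH\ bound needs.
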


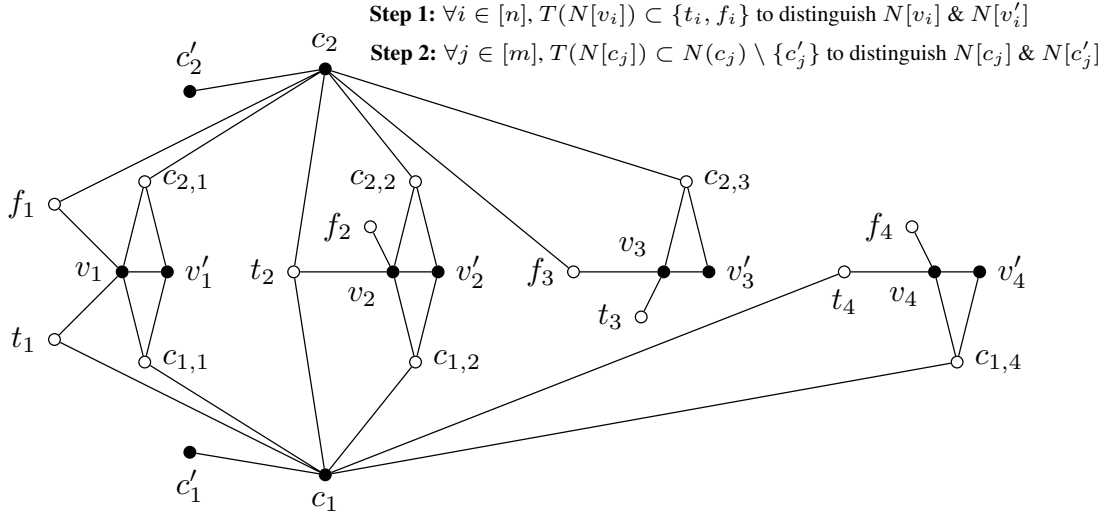
\begin{figure}[ht]
\centering
\scalebox{1.2}
{\input{fig/reduction_with_TM}}
	\caption{Example of the graph $G$ constructed in the proof of Theorem~\ref{thm:poshard} with $\varphi = (\overline{x_1}\vee \overline{x_2}\vee \overline{x_4}) \wedge (x_1\vee \overline{x_2}\vee {x_3})$ in the \textsc{3-SAT} instance. Only the closed neighborhoods of filled vertices are in~$\mathcal{B}$.}
        \label{fig:reduction_np}
\end{figure}

\begin{proof}
    Given an instance $\varphi$ of $n$-variable and $m$-clause \textsc{3-SAT}, we construct a graph $G$ as follows:
    \begin{enumerate}
        
       \item for each clause $C_j$ in $\phi$, there is a clause vertex $c_j$ adjacent to a pendent vertex $c_j'$;
        
       \item for each variable $x_i$ in $\phi$, there is a variable vertex $v_i$ adjacent to a vertex $v_i'$ as well as $2$ literal vertices $t_i$ and~$f_i$;
        
       \item for all $i\in [n]$, if the literal $x_i$ ($\overline{x}_i$, resp.) is contained in the clause $C_j\in \phi$, then the literal vertex $f_i$ ($t_i$, resp.) is adjacent to the clause vertex $c_j$; 
        
       \item for all $i\in [n]$ and $j\in [m]$ such that $c_j$ is adjacent to $t_i$ or $f_i$, there is a vertex $c_{j,i}$ adjacent to $v_i$, $v_i'$, and~$c_j$.
    \end{enumerate}
    
    This concludes the construction of $G$ (see Fig.~\ref{fig:reduction_np}).
    We set $\mathcal{B}$ to contain the closed neighborhood of each variable vertex $v_i$, each clause vertex $c_j$, and each of their pendent vertices $v_i'$ and $c_j'$.
    So, from $\varphi$, the reduction outputs the \probplus\ instance $(G, \mathcal{B}, 1)$.
    We now prove the reduction is correct.

    Let $T$ be a positive NCTM of size $1$ for $\mathcal{B}$. For all $i\in [n]$, $T(N[v_i])\subset \{t_i, f_{i}\}$ as $N[v_i']\subset N[v_i]$ and $N[v_i]\setminus N[v_i']=\{t_i,f_i\}$. 
    We extract a variable assignment $\tau\colon \{x_1, \ldots, x_n\} \rightarrow \{\true, \false\}$ as follows: if $T(N[v_i])=\{t_i\}$, then $\tau(x_i)=\true$, and otherwise, $\tau(x_i)=\false$.

    Toward a contradiction, suppose that $\tau$ is not a satisfying assignment for $\varphi$. Then, there exists a clause $C_j$ in $\varphi$ that is not satisfied by $\tau$. Let $x$, $x'$, and $x''$ be the $3$ variables in $C_j$ in $\varphi$ (and the $3$ corresponding variable vertices in $G$). Since $C_j$ is not satisfied, $T(N[x]),T(N[x']),T(N[x''])\subset N[c_j]$ by the construction of $G$ and the definition of~$\tau$. Thus, as $T$ is non-clashing for $N[c_j]$ and each of $N[x]$, $N[x']$, and $N[x'']$, then $T(N[c_j])$ must consist of a vertex $u\in V(G)$ such that $u\in N[c_j]$, $u\notin N[x]$, $u\notin N[x']$, and $u\notin N[x'']$. It is easy to check that the only possibilities for such a vertex $u$ are $c_j$ and $c_j'$. However, $T(N[c_j])\not \subset \{c_j,c_j'\}$ since $N[c_j']=\{c_j,c_j'\}\subset N[c_j]$, as otherwise $T$ does not satisfy the non-clashing condition for $N[c_j]$ and $N[c_j']$. Hence, this contradicts that $T$ is an NCTM for $\mathcal{B}$. Therefore, $\tau$ is a satisfying assignment for $\varphi$.

    For the reverse direction, let $\tau\colon \{x_1, \ldots, x_n\} \rightarrow \{\true, \false\}$ be a satisfying assignment of the \textsc{3-SAT} formula $\varphi$. We define a positive teaching map $T$ for $\mathcal{B}$. For all $i\in [n]$, if $\tau(x_i)=\true$, then $T(N[v_i]):=\{t_i\}$, and otherwise, $T(N[v_i]):=\{f_i\}$. For all $i\in [n]$, $T(N[v_i']):=\{v_i'\}$. For all $j\in [m]$, there exists $i\in [n]$ such that $\tau(x_i)$ satisfies $C_j$, and we set $T(N[c_j]):=\{c_{j,i}\}$. Lastly, for all $j\in [m]$, $T(N[c_j']):=\{c_j'\}$. We now prove that $T$ is a positive NCTM for $\mathcal{B}$.
        
    For all $i\in [n]$, $T(N[v_i])\cap N[v_i']=\emptyset$, and thus, $T(N[v_i])$ distinguishes $N[v_i]$ and $N[v_i']$.
    Moreover, for all $i\in [n]$, $T(N[v_i'])=v_i'$ distinguishes $N[v_i']$ and all the other remaining closed neighborhoods in $\mathcal{B}$.
    For distinct $i,\ell\in [n]$, $T(N[v_i])$ distinguishes $N[v_i]$ and $N[v_{\ell}]$ since $T(N[v_i]) \cap N[v_{\ell}]=\emptyset$. By symmetric arguments, for each $j \in [m]$, $T$ satisfies the non-clashing condition for $N[c_j']$ and any other closed neighborhood in $\mathcal{B}$, as well as any pair of closed neighborhoods of clause vertices.

    For all $i\in [n]$ and $j\in [m]$ such that $c_{j,i}\in T(N[c_j])$, $\tau(x_i)$ satisfies the clause $C_j$ in $\varphi$, and thus, $T(N[v_i])\cap N[c_j]=\emptyset$, resulting in $N[v_i]$ and $N[c_j]$ being distinguished by $T(N[v_i])$. For all $i\in [n]$ and $j\in [m]$ such that $c_{j,i}\notin T(N[c_j])$, $N[v_i]$ and $N[c_j]$ are distinguished by $T(N[c_j])$.
    The case analysis is complete, and so, $T$ is a positive NCTM for~$\mathcal{B}$.
    
    Finally, note that both $|V(G)|$ and $|E(G)|$ are linear in the number of variables and clauses in the \textsc{3-SAT} instance $\varphi$. So, if \probplus\ can be solved in $2^{o(f(k)\cdot(|V(G)|+|E(G)|))}$ time for a computable function $f$, then \textsc{3-SAT} can be solved in $2^{o(n+m)}$ time as $f(k)=\OO(1)$, contradicting the \ETH.
\end{proof}

\begin{theorem}\label{thm:exact-upper}
\probplus\ can be solved in $2^{\mathcal{\OO}(|E(G)|)}$ time.
\end{theorem}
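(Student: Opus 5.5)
The algorithm is brute force over all positive teaching maps. Its running time is $2^{\OO(|E(G)|)}$ because each candidate teaching set $T(N[v])$ must be a subset of $N[v]$, and $\sum_{v}|N[v]|=\sum_v(\deg(v)+1)=2|E(G)|+|V(G)|$. The only obstacle is the additive $|V(G)|$ term, which is not bounded by $|E(G)|$ when $G$ has many isolated vertices. So the first step is a preprocessing step that removes isolated vertices.

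\emph{Preprocessing.} Let $v$ be an isolated vertex of $G$. If $N[v]=\{v\}\notin\mathcal{B}$, then $v$ lies in no set of $\mathcal{B}$. Hence $v$ never distinguishes any pair, and we delete it. If $\{v\}\in\mathcal{B}$, we fix $T(\{v\}):=\{v\}$, which is allowed since $k\geq 1$. As $v$ belongs to no other closed neighborhood, $v$ distinguishes $\{v\}$ from every other member of $\mathcal{B}$. We then delete $v$ and remove $\{v\}$ from $\mathcal{B}$. Deleting $v$ changes no other closed neighborhood, so the remaining instance is equivalent. Indeed, any positive NCTM for the reduced instance extends by $T(\{v\})=\{v\}$, and any positive NCTM for the original restricts to the reduced instance, because the pairs not involving $\{v\}$ are unaffected. (Here $\mathcal{B}$ is a set, so false or true twins contribute a single concept and cause no issue.) After exhaustive application, $G$ has no isolated vertices, so $|V(G)|\leq 2|E(G)|$.

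\emph{Enumeration.} For every $N[v]\in\mathcal{B}$ we guess $T(N[v])\subseteq N[v]$ with $|T(N[v])|\leq k$. The number of candidate maps is at most
\[
\prod_{N[v]\in\mathcal{B}}2^{|N[v]|}\leq 2^{\sum_{v\in V(G)}(\deg(v)+1)}=2^{2|E(G)|+|V(G)|}=2^{\OO(|E(G)|)}.
\]
For each candidate we check the non-clashing condition. For every pair of distinct $N[u],N[v]\in\mathcal{B}$, we test whether some $w\in T(N[u])\cup T(N[v])$ lies in exactly one of the two sets. This takes $|V(G)|^{\OO(1)}=2^{\OO(|E(G)|)}$ time per candidate. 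We answer yes if and only if some candidate passes. Correctness is immediate, since every positive teaching map of size at most $k$ is among the candidates.

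The whole argument hinges on the preprocessing step. Without it, the bound would be $2^{\OO(|E(G)|+|V(G)|)}$, so the one point to verify carefully is the equivalence claimed for isolated vertices; everything else is routine counting.
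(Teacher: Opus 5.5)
Your proof is correct and uses the same brute-force enumeration as the paper: at most $\prod_{v}2^{|N[v]|}$ positive teaching maps, each checked for the non-clashing condition in polynomial time. The one difference is your isolated-vertex preprocessing, and it repairs a gap in the paper's argument rather than taking a detour.

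The paper writes $2^{\sum_{v}(d(v)+1)}=2^{\OO(|E(G)|)}$ without comment. But $\sum_{v}(d(v)+1)=2|E(G)|+|V(G)|$, so this silently needs $|V(G)|=\OO(|E(G)|)$, for instance that $G$ has no isolated vertices. That assumption fails for an edgeless graph with $\mathcal{B}=\{\{v\}\mid v\in V(G)\}$: there are $2^{|V(G)|}$ positive teaching maps while $|E(G)|=0$. Your reduction is sound, since it uses only $k\geq 1$ and the fact that an isolated $v$ lies in no member of $\mathcal{B}$ other than $\{v\}$. Afterwards $|V(G)|\leq 2|E(G)|$, so your count holds in general, whereas the paper's one-line bound needs that extra assumption.

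Two small remarks:
\begin{itemize}
\item Your parenthetical about twins is inaccurate. False twins (e.g., two isolated vertices) have distinct closed neighborhoods and are simply handled one at a time; nothing in your argument depends on the remark.
\item The preprocessing costs time polynomial in the original input, so strictly the bound is $2^{\OO(|E(G)|)}+|V(G)|^{\OO(1)}$. This is unavoidable, since the input must be read, and it is the convention the statement implicitly uses.
\end{itemize}
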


\begin{proof}
For all $v\in V(G)$, $|N[v]|=d(v)+1$, where $d(v)$ is the degree of $v$. Therefore, there~are $2^{d(v)+1}$ choices for $T(N[v])$. So, there are at most $2^{\sum_{v\in V(G)}(d(v)+1)}=2^{\mathcal{\OO}(|E(G)|)}$ distinct positive teaching maps for~$\mathcal{B}$, each of which can be checked to be non-clashing or not in polynomial time.
\end{proof}

\section{Fixed-Parameter Algorithms}\label{sec:fpt}

In this section, we prove that \probplus\ is \FPT\ parameterized by the \emph{treedepth} of $G$ and \prob\ is \FPT\ parameterized by the vertex cover number of $G$. As mentioned before, this further motivates considering closed neighborhoods instead of balls in graphs, as treedepth is a more general parameter than vertex integrity (see footnote~\ref{fn:treedepth}), and no \FPT\ algorithm was known when the teacher may use negatively-labeled examples. The \emph{treedepth} $\mathtt{td}(G)$ of a connected graph $G$ is the minimum height of a rooted tree $\mathcal{T}$ such that $V(\mathcal{T})=V(G)$ and, for all $uv\in E(G)$, either $u$ is an ancestor of $v$ or $v$ is an ancestor of $u$ in $\mathcal{T}$. The tree $\mathcal{T}$ is a \emph{treedepth decomposition} of $G$ (see Fig.~\ref{fig:td} for an illustration).

\begin{figure}[h!]
\centering
\includegraphics[scale=0.4]{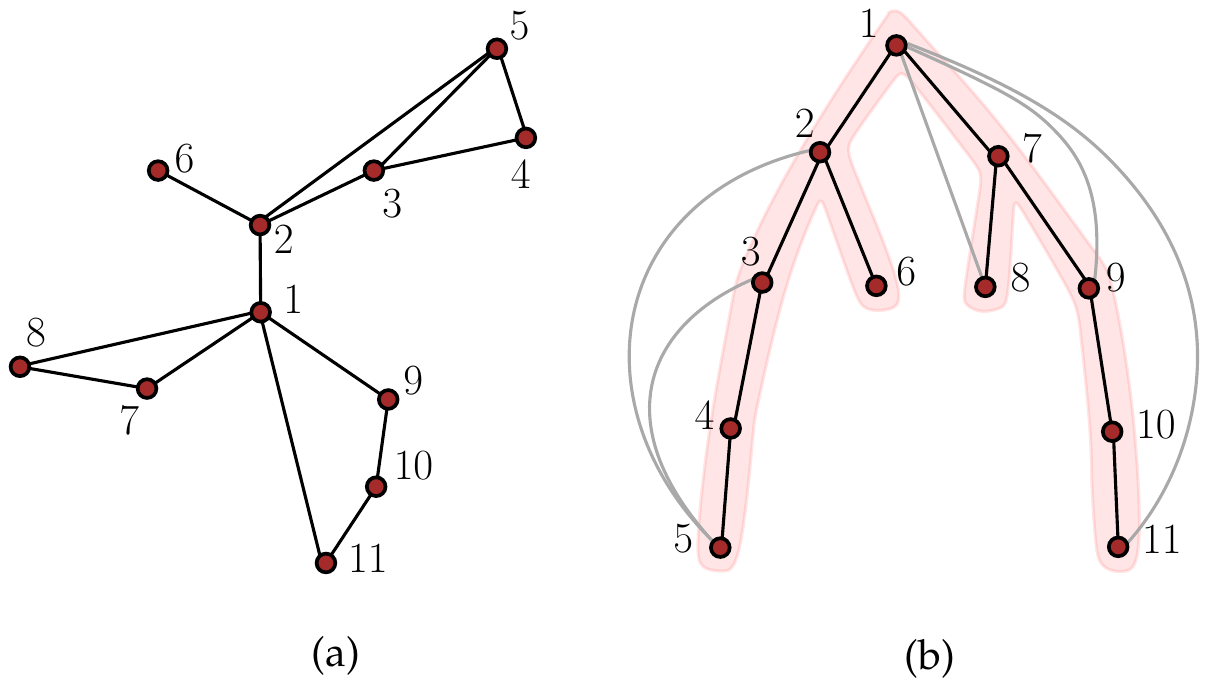}
\caption{(a) A graph $G$. (b) A treedepth decomposition $\mathcal{T}$ of $G$ witnessing $\mathtt{td}(G)\leq 4$.}
\label{fig:td}
\end{figure}

\begin{theorem}\label{thm:fpt-td}
    \probplus\ is \FPT\ parameterized by the treedepth of $G$.
\end{theorem}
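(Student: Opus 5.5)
Let $d=\mathtt{td}(G)$ and fix a treedepth decomposition $\mathcal{T}$ of height at most $d$, computable in \FPT\ time. The plan has three steps: bound $k$ by a function of $d$, reduce to bounded-size instances by pruning redundant subtrees, and then brute-force the remaining instance.

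\textbf{Step 1: bound $k$.} Graphs of treedepth $d$ are $d$-degenerate. I will show that every set $\mathcal{B}$ of closed neighborhoods admits a positive NCTM of size bounded by a function $g(d)$. The candidate map is $T(N[v]) := \{v\}\cup(N(v)\cap \mathrm{anc}(v))$, where $\mathrm{anc}(v)$ is the set of ancestors of $v$ in $\mathcal{T}$; it has size at most $d$. To check that it is non-clashing, take distinct $N[u],N[v]\in\mathcal{B}$.
\begin{itemize}
\item If $u$ and $v$ are incomparable in $\mathcal{T}$, they are non-adjacent, so $u\in T(N[u])$ distinguishes the pair.
\item If $u$ is an ancestor of $v$ and $uv\notin E(G)$, then $u\in N[u]\setminus N[v]$ distinguishes the pair.
\item If $u$ is an ancestor of $v$ and $uv\in E(G)$, the pair may genuinely clash, so the map must be modified there.
\end{itemize}
Rather than settle the third case, I will accept a cruder but safe bound: let the algorithm check all teaching sets of size at most $k$, and use the following reduction to make this \FPT\ in $k+d$. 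Then I argue that if $k$ is large relative to $d$, the answer is trivially determined; for example, if $k\ge 2^{d}$, a map that separates each vertex from its ancestors' neighborhoods should always exist. This is where I am least sure, and I would try to prove it first.

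\textbf{Step 2: prune redundant subtrees.} For a vertex $x$ at depth $i$, the subtree $\mathcal{T}_x$ interacts with the rest of $G$ only through the at most $i$ ancestors of $x$. Define the \emph{type} of $\mathcal{T}_x$ recursively. It records:
\begin{itemize}
\item the adjacency of each subtree vertex to the ancestors;
\item membership in $\mathcal{B}$;
\item for every candidate partial assignment of teaching sets, the set of achievable ``interfaces''. An interface consists of which ancestor vertices are used in teaching sets inside the subtree, the ancestor-restricted neighborhood patterns $N[w]\cap \mathrm{anc}(x)$ of neighborhoods in $\mathcal{B}$ inside the subtree, and which of these patterns still need to be distinguished by ancestors' teaching sets.
\end{itemize}
The key observation is that two vertices $w,w'$ in disjoint sibling subtrees can only clash if $N[w]\cap\mathrm{anc}=N[w']\cap\mathrm{anc}$ and neither teaching set uses a vertex of its own subtree. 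The number of such patterns is bounded in terms of $d$ and $k$.

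By induction from the leaves, the number of types at depth $i$ is bounded by a function of $d$ and $k$. Among children of $x$ of the same type, I keep only a bounded number $h(d,k)$ of them, say $k+2$ copies. This is justified by a Lemma~\ref{lem:twins}-style counting argument: beyond a threshold, the extra identical subtrees impose no new constraints, because any solution on the kept copies extends to the deleted ones by copying, and distinctions between identical copies are handled internally. After pruning, $|V(G)|$ is bounded by a function of $d$ and $k$, and by Step 1 by a function of $d$ alone.

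\textbf{Step 3: solve and prove safety.} Finally, apply Theorem~\ref{thm:exact-upper} to the reduced instance, which takes $2^{\OO(|E(G)|)}$ time, a function of $d$ only.

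The main obstacle is proving that deleting surplus same-type subtrees is safe. Copies of the same type clash with one another exactly when their roots have equal ancestor patterns, and resolving those clashes consumes teaching budget. So the type must also record, for each pattern, whether the subtree can teach all its $\mathcal{B}$-members using internal vertices only. I will handle this first by making that bit part of the type, and then show that the threshold $h$ needs only to exceed the number of patterns that can be resolved by ancestor vertices, which is bounded by $d\cdot k$.
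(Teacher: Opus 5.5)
\textbf{There is a genuine gap: you assert but do not prove that pruning is safe, and the hard direction is never identified.} Your justification, ``any solution on the kept copies extends to the deleted ones by copying'', covers only the reverse direction. In the forward direction you must turn a positive NCTM $T$ of $G$ into one of $G-C$, and plain restriction fails. An ancestor $a$ of the pruned subtrees may have vertices of $C$ in $T(N[a])$. Such a vertex can be the only distinguisher of $N[a]$ from $N[w]$, for $w$ in a sibling subtree or $w$ another ancestor. Nothing in your types or interfaces records which subtree vertices occur in the ancestors' teaching sets. This is where the paper's key idea sits. By pigeonhole it finds three copies $A_P,A_Q,A_R$ that are isomorphic over $X$, agree on membership in $\mathcal{B}$, and carry the same teaching sets on their own vertices. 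Because positivity gives $T(N[c])\subseteq V(A_C)\cup X$, the number of such patterns depends only on $|X|$ and $t$. It then reroutes every occurrence of $v\in V(A_P)$ in an ancestor's teaching set to $\sigma_{P,Q}(v)$ or $\sigma_{P,R}(v)$; the third copy is used when that teaching set already meets $A_Q$. A case analysis shows no pair starts to clash. As a result the paper never needs a bound on $k$.

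Your own framework could be completed differently. If $k\le g(d)$, at most $d\cdot k$ copies meet the teaching sets of the at most $d$ ancestors. With more than $dk$ copies of one graph type, some copy is untouched and restriction works there. Since copies of one type are exchanged by an automorphism of $(G,\mathcal{B})$, it does not matter which copy the rule deletes. None of this is in the proposal, and your threshold of $k+2$ copies is too small for it.

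Your route also rests on Step~1, which you leave as a guess. Without it your types and thresholds depend on $k$, so you only get \FPT\ in $d+k$. The candidate map $\{v\}\cup(N(v)\cap\mathrm{anc}(v))$ already clashes on a star rooted at its centre. The bound does hold. Start from $\{u\}\cup(N(u)\cap\mathrm{anc}(u))$ and, calling a neighbour of $u$ that is a descendant of $u$ a lower neighbour, add:
\begin{itemize}
\item two lower neighbours of $u$ in different branches below the lowest common ancestor of those lower neighbours that are incomparable to some other lower neighbour;
\item for each of the at most $d$ lower neighbours $v$ comparable to all others, one lower neighbour of $u$ outside $N[v]$;
\item for each ancestor $a$ of $u$, one lower neighbour of $u$ outside $N[a]$.
\end{itemize}
This gives a positive NCTM of size $O(d)$.

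Two smaller points also need repair. First, your ``key observation'' is false as stated. Vertices in sibling subtrees with different ancestor patterns still clash if both teaching sets lie among the ancestors and miss the symmetric difference. What is true is that the pair is distinguished as soon as one teaching set meets its own subtree. Second, making ``internally teachable'' a bit of the type does not fit a solution-independent reduction rule, and it is unnecessary. Two copies cannot both teach the same position using only ancestor vertices, because their neighbourhoods agree on the ancestors. Hence among $t+1$ kept copies one teaches every position with an internal vertex, and its teaching sets can be transplanted to the restored copy.
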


We prove Theorem~\ref{thm:fpt-td} via an algorithm that exhaustively applies a reduction rule to prune~the tree $\mathcal{T}$ in the treedepth decomposition of $G$ from the bottom up, yielding a tree whose number of vertices is a function of $\mathtt{td}(G)$ (see Fig.~\ref{fig:prune}). From there, we compute NCTD$^+(\mathcal{B})$ by brute force in \FPT\ time w.r.t.~$\mathtt{td}(G)$. For simplicity, we state the reduction rule in more general terms and do not specify the component to delete. However, the proof that it is safe (Lemma~\ref{lem:safe1}) specifies the component; a reduction rule is \textit{safe} if the input instance is a YES-instance if and only if the output instance~is.

\begin{reduction rule}\label{rr1}
Let $G$ be a graph, $X\subseteq V(G)$ a subset of its vertices, and $A=\{A_1,\ldots,A_{\ell}\}$ a subset of the connected components of $G-X$ such that $\max_{i\in [\ell]}|A_i|=t$. If $\ell> (|X|+t)\cdot 2^{(|X|+t)^2}\cdot 2^{2t+|X|+1}$, then delete a particular connected component in $A$.
\end{reduction rule}

\begin{lemma}\label{lem:safe1}
    Reduction rule~\ref{rr1} is safe for \probplus.
\end{lemma}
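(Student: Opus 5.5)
The plan is to use a pigeonhole argument over "types" of components, and to choose the deleted component so that both directions of the equivalence are easy.

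\textbf{Types and the choice of component.} Write $Y:=X\cup\bigcup_{i\in[\ell]}V(A_i)$. For each component $A_i$, fix an ordering of its at most $t$ vertices. Its \emph{type} records:
\begin{itemize}
\item the graph induced on $X\cup V(A_i)$ under this ordering (at most $2^{(|X|+t)^2}$ choices);
\item which vertices $v$ of $A_i$ have $N[v]\in\mathcal{B}$;
\item for each $w\in X$, whether $N[w]\in\mathcal{B}$.
\end{itemize}
This gives at most $2^{(|X|+t)^2}\cdot 2^{t}\cdot 2^{|X|}$ types. I expect the remaining factor $(|X|+t)\cdot 2^{t+1}$ in the threshold of Reduction Rule~\ref{rr1} to be exactly what the argument below consumes. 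Since $\ell$ exceeds the bound, some type $\theta$ is shared by more than $(|X|+t)\cdot 2^{t+1}$ components. Two components of the same type are isomorphic via their orderings, and the isomorphism fixes $X$ pointwise. Every vertex outside $Y$ is adjacent to no vertex of any $A_i$, since the $A_i$ are components of $G-X$. Hence, for $v\in A_i$ and its copy $v'\in A_j$, the sets $N[v]$ and $N[v']$ agree outside $A_i\cup A_j$, and within them they are swapped by the isomorphism. We delete an arbitrary component $A^\ast$ of type $\theta$. The counting argument below is meant to show that this is harmless.

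\textbf{Backward direction ($G-A^\ast$ YES $\Rightarrow$ $G$ YES).} Let $T'$ be a positive NCTM of size $k$ for $\mathcal{B}'$, the neighborhoods of $G-A^\ast$. Neighborhoods of vertices in $X$ only lose vertices of $A^\ast$ under the deletion, so each $T'(N[w])$ is still a subset of $N_G[w]$ and remains a valid positive set. To define $T$ on $A^\ast$, I copy $T'$ from a suitable donor component $A'$ of type $\theta$, transporting each set via the isomorphism; vertices of $X$ are mapped to themselves.

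The only new danger is a clash between $v\in A^\ast$ and its copy $v'\in A'$. This clash occurs exactly when both $T(N[v])$ and $T'(N[v'])$ lie inside $X$.

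The key observation concerns two copies $v'\in A'$ and $v''\in A''$ of the same position whose teaching sets both lie in $X$. Then both sets lie in $N[v']\cap X=N[v'']\cap X$, so $T'$ clashes on $N[v'],N[v'']$. Therefore, for each of the at most $t$ positions, at most one component of type $\theta$ uses an $X$-only teaching set at that position. We choose the donor $A'$ among the remaining components, which is possible because the count is large.

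All other pairs are then checked by transport along the isomorphism:
\begin{itemize}
\item pairs $(v,u)$ with $u\notin Y$ behave as $(v',u)$;
\item pairs $(v,w)$ with $w\in X$ behave as $(v',w)$, since a distinguishing vertex inside $A'\cup X$ is mapped to one inside $A^\ast\cup X$ with the same membership;
\item pairs $(v,u)$ with $u$ in a third component are handled by $T(N[v])$ or $T'(N[u])$. If neither contains a component vertex, then both sets lie in $X$, and the two copies at that position would both use $X$-only teaching sets. By the choice of donor this is impossible, so one of the two sets contains a component vertex, which distinguishes.
\end{itemize}

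\textbf{Forward direction ($G$ YES $\Rightarrow$ $G-A^\ast$ YES).} Let $T$ be a positive NCTM of size $k$ for $G$. The difficulty is that some sets $T(N[w])$ with $w\in X$ may contain vertices of $A^\ast$. Such a vertex $a$ only distinguishes $N[w]$ from neighborhoods that differ on $a$; apart from those of vertices of $A^\ast$, which disappear, these are the $N[w']$ with $w'\in X$. The plan is to replace every such $a$ by its copy $a'$ in another component of type $\theta$ in which no vertex belongs to any $T(N[w])$, $w\in X$. Because $a'$ has the same adjacencies to $X$ as $a$, it distinguishes the same pairs $N[w],N[w']$. The replacement does not increase set sizes and keeps positivity, since $a'\in N[w]$ exactly when $a\in N[w]$.

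\textbf{Main obstacle.} The hard step is guaranteeing that such a replacement component exists. The sets $T(N[w])$ could touch many components, because $k$ is not bounded in terms of $X$ and $t$.

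My first attempt is to refine the type by the trace pattern $\{T(N[w])\cap A_i\}_{w\in X}$, transported by the isomorphism. That trace pattern is what the extra factor $2^{t}$ should pay for. I would then argue by pigeonhole that the deleted component and a sibling share this pattern, so that the substitution can be simulated by simply keeping the sibling's existing vertices.

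Should the counting not match this exact threshold, I would fall back on first normalizing $T$ on $X$: for each pair in $X$, keep only one witness per class of interchangeable component vertices. Interchangeable here means having identical $X$-adjacency, and there are at most $2^{|X|}$ such classes. This normalization bounds how many components are ever touched by teaching sets of $X$ vertices, which would leave an untouched component of type $\theta$ to delete.
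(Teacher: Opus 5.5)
Your backward direction works. Choosing the donor so that none of its positions carries an $X$-only teaching set forces every transported set $T(N[v])$, $v\in V(A^\ast)$, to contain a vertex of $A^\ast$. That vertex separates $N[v]$ from every neighborhood centered outside $X\cup V(A^\ast)$, and the remaining pairs transfer along the isomorphism.

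The gap is in the forward direction. You assert that a vertex $a\in T(N[w])\cap V(A^\ast)$, $w\in X$, only distinguishes $N[w]$ from neighborhoods of vertices in $X\cup V(A^\ast)$. This is false. For every $y\notin X\cup V(A^\ast)$ we have $a\in N[w]\setminus N[y]$, so $a$ also separates $N[w]$ from the neighborhoods of all vertices in all other components. Its copy $a'$ in the receiving component $A''$ fails to do so exactly for $y\in N[a']\cap V(A'')$, and your choice of $A''$ does not protect these pairs.

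Here is a concrete counterexample. Take $X=\{w\}$, components $\{p_i,q_i\}$ with edges $p_iq_i$ and $wp_i$, and let $\mathcal{B}$ be all closed neighborhoods. Set $T(N[w])=\{p_1\}$, $T(N[p_1])=\{w,q_1\}$, $T(N[p_j])=\{w,p_j\}$ for $j\geq 2$, and $T(N[q_i])=\{q_i\}$. This is a positive NCTM of size $2$, and all components have the same type. Delete $A^\ast=\{p_1,q_1\}$ and move $p_1$ to an untouched $p_j$. Then $T'(N[w])\cup T(N[p_j])=\{w,p_j\}\subseteq N[w]\cap N[p_j]$, which is a clash.

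The same oversight breaks your second fallback as stated. With $T(N[p_i])=\{w,p_i\}$ for all $i$ and $T(N[w])=\{p_1,p_2\}$, keeping one witness per $X$-adjacency class leaves a single $p_i$, which clashes with $N[p_i]$. The existence of an untouched component, which you flag as the main obstacle, is also never established.

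What is missing is a way to protect the pairs $N[w],N[y]$ with $y$ in the receiving component. The paper does this with a $T$-dependent pigeonhole. It finds three components $A_P,A_Q,A_R$ that agree structurally and also in the teaching sets of their own vertices. It moves $v\in T(N[w])\cap V(A_P)$ to $\sigma_{P,Q}(v)$, or to $\sigma_{P,R}(v)$ if $T(N[w])$ already meets $A_Q$. A clash with $y\in V(A_Q)$ is then resolved by transporting the witness for the pair $N[w],N[\sigma_{P,Q}^{-1}(y)]$. This works precisely because the internal teaching sets of $A_Q$ copy those of $A_P$.

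Your fallback can also be repaired. Replace $T(N[w])\setminus X$ by one vertex per $X$-adjacency class present, chosen so that the kept vertices meet two distinct components whenever the original ones did. If the original set met only one component, leave it unchanged. This never enlarges a set and preserves every distinction: for pairs inside $X$ only the class matters, and for $y\notin X$ one of two witnesses in different components lies outside $N[y]$. The normalized sets touch at most $|X|\cdot 2^{|X|}$ components. If you count types only by the edges inside a component and to $X$, the threshold leaves an untouched component of type $\theta$, which can be deleted with $T$ simply restricted.

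In both routes the deleted component depends on $T$. You therefore also need the observation that any two components of the same type are swapped by an automorphism of $G$ preserving $\mathcal{B}$. That is what makes deleting your arbitrary $A^\ast$ equivalent.
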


\begin{proof}
    First, we provide intuition. If $\ell$ is large, by the pigeonhole principle, there are 3 automorphic components in $A$ that are ``identical'' with respect to their adjacencies in $X$, and the closed neighborhoods of $\mathcal{B}$ that they contain along with their teaching sets. We select one of these 3 components, consider each teaching set that contains a vertex from it, and replace that vertex by one of its two automorphic copies (made possible by their identicalness), allowing that component to be deleted.

    Formally, let $X\subseteq V(G)$ be such that $A=\{A_1,\ldots,A_{\ell}\}$ is a subset of the connected components of $G-X$, where $\max_{i\in [\ell]}|A_i|=t$ and $\ell>(|X|+t)\cdot 2^{(|X|+t)^2}\cdot 2^{2t+|X|+1}$. Let $T$ be a positive NCTM of size at most $k$ for $\mathcal{B}$. We first prove that there exist $A_P,A_Q,A_R\in A$ that are automorphic to each other and such that, for any distinct $A_C,A_D\in \{A_P,A_Q,A_R\}$, there exists an automorphism $\sigma:V(A_C)\rightarrow V(A_D)$ where, for all $c\in V(A_C)$ and $d\in V(A_D)$ with $\sigma(c)=d$, it holds that:
    
        \begin{enumerate}
    
    \item $N(c)\cap X=N(d)\cap X$;
        
    \item $N[c]\in \mathcal{B}$ if and only if $N[d]\in \mathcal{B}$;
        
    \item if $N[c],N[d]\in \mathcal{B}$, then $T(N[c])\cap X=T(N[d])\cap X$ and, for all $u\in V(A_C)$ and~$v\in V(A_D)$ with $\sigma(u)=v$, we have $u\in T(N[c])\cap V(A_C)$ if and only if~$v\in T(N[d])\cap V(A_D)$.
        \end{enumerate}

We prove this via the pigeonhole principle. Trivially, the number of non-automorphic graphs on at most $|X|+t$ vertices is at most $(|X|+t)\cdot 2^{(|X|+t)^2}$. Thus, since $\ell>2(|X|+t)\cdot 2^{(|X|+t)^2}$, there exist $A_P,A_Q,A_R\in A$ that are automorphic to each other and such that, for any distinct $A_C,A_D\in \{A_P,A_Q,A_R\}$, there exists an automorphism $\sigma:V(A_C)\rightarrow V(A_D)$ where, for all $c\in V(A_C)$ and $d\in V(A_D)$ with $\sigma(c)=d$, 1.~holds. Since $\ell>2(|X|+t)\cdot 2^{(|X|+t)^2}\cdot 2^t$, we also have that 2.~holds. Moreover, as $\ell>2(|X|+t)\cdot 2^{(|X|+t)^2}\cdot 2^t\cdot 2^{t+|X|}$, we also have that 3.~holds. 

Select $w\in V(G)\setminus V(A_P)$ such that there exists $v\in T(N[w])\cap V(A_P)$; we can assume $w$ exists, as otherwise we delete $V(A_P)$. Note that $w\in X$. We show that removing $v$ from $T(N[w])$ and adding a particular vertex $z\in V(A_Q)\cup V(A_R)$ to $T(N[w])$ maintains that $T$ is a positive NCTM for $\mathcal{B}$ in $G$. Let $\sigma_{P,Q}:V(A_P)\rightarrow V(A_Q)$ ($\sigma_{P,R}:V(A_P)\rightarrow V(A_R)$, resp.) be the automorphism between $V(A_P)$ and $V(A_Q)$ ($V(A_R)$, resp.). If $T(N[w])\cap V(A_Q)=\emptyset$, then~$z=\sigma_{P,Q}(v)$, and if not, but $\sigma_{P,R}(v)\notin T(N[w])$, then  $z=\sigma_{P,R}(v)$, and otherwise no vertex is added to $T(N[w])$.

Let $T'$ be the teaching map for $\mathcal{B}$ obtained from $T$ by applying the above procedure for all $w\in X$ and $v\in V(A_P)$ such that $v\in T(N[w])$. Note that $T'$ has size at most $k$, and~$T'$ is a positive teaching map for $\mathcal{B}$ since $T$ is, and due to property 1.~above. We show that $T'$ is a positive NCTM for $\mathcal{B}$. Toward a contradiction, suppose there exist $x,y\in V(G)$ such that $T'$ does not satisfy the non-clashing condition for $N[x]$ and $N[y]$. Due to how $T'$ was obtained from $T$, and since $T$ is an NCTM for $\mathcal{B}$, while $T'$ is not, then the only vertices in $T(N[x])\cup T(N[y])$ distinguishing $N[x]$ and $N[y]$ are in $V(A_P)$. Thus, w.l.o.g., $v\in (T(N[x])\cup T(N[y])) \cap V(A_P)$ distinguished $N[x]$ and $N[y]$. Then, $\sigma_{P,Q}(v)\in T'(N[x])\cup T'(N[y])$ and/or $\sigma_{P,R}(v)\in T'(N[x])\cup T'(N[y])$, and, w.l.o.g., $v\in N[x]$ and $v\notin N[y]$. We do a case analysis (see Fig.~\ref{fig:tdcases}). Let $u=\sigma_{P,Q}(v)$ and $u'=\sigma_{P,R}(v)$.

\smallskip

\noindent\textbf{Case 1:} $x\in N[v]\cap V(A_P)$. Thus, $v\in T'(N[x])$ distinguishes $N[x]$ and $N[y]$, a contradiction.

\smallskip

\noindent\textbf{Case 2:} $x\in N(v)\cap X$ and $y\notin N[u]$. As $N(v)\cap X=N(u)\cap X=N(u')\cap X$, then $u,u'\in N[x]$. Hence, $u'$, $u$ or a vertex in $T'(N[x])\cap V(A_Q)$ distinguishes $N[x]$ and $N[y]$, a contradiction.

\smallskip

\noindent\textbf{Case 3:} $x\in N(v)\cap X$ and $y\in N[u]$. Thus, $y\in V(A_Q)$. Let $\sigma_{P,Q}(y')=y$. As $T$ is an NCTM for $\mathcal{B}$, there exists $r\in T(N[x])\cup T(N[y'])$ that distinguishes $N[x]$ and $N[y']$. We now study subcases.

\smallskip

\noindent\textbf{Case 3.1:} $r\in T(N[y'])$. Here, $r\in T'(N[y])$ distinguishes $N[x]$ and $N[y]$, a contradiction. 

\smallskip

\noindent\textbf{Case 3.2:} $r\in T(N[x])$. If $r\notin V(A_P)$, then $r\in T'(N[x])$. If $r\in V(A_P)$, then $\sigma_{P,Q}(r)\in T'(N[x])$ and/or $\sigma_{P,R}(r)\in T'(N[x])$. We again study subcases.

\smallskip

\noindent\textbf{Case 3.2.1:} $r\notin N[y]$. If $r\notin V(A_P)$, then $r\in T'(N[x])$ distinguishes $N[x]$ and $N[y]$, a contradiction. Otherwise, as $r\in T(N[x])$ distinguishes $N[x]$ and $N[y']$, then $r\notin N[y']$, which implies that $\sigma_{P,Q}(r)\notin N[y]$. So, $\sigma_{P,Q}(r)$ or $\sigma_{P,R}(r)$ in $T'(N[x])$ distinguishes $N[x]$ and $N[y]$, a contradiction.

\smallskip

\noindent\textbf{Case 3.2.2:} $r\in N[y]$. As $T$ is an NCTM for $\mathcal{B}$, there exists $r'\in T(N[x])\cup T(N[y])$ distinguishing $N[x]$ and $N[y]$. If $r'\in T(N[y])$, then $r'\in T'(N[y])$ distinguishes $N[x]$ and $N[y]$. Hence, $r'\in T(N[x])$. Further, if $r'\notin N[y']$, then either $r'\notin V(A_P)$, and so, $r'\in T'(N[x])$, or $r'\in V(A_P)$, in which case $\sigma_{P,R}(r')\in T'(N[x])$ as $r\in T'(N[x])\cap N[y]$. In both cases, $T'$ satisfies the non-clashing condition for $N[x]$ and $N[y]$, a contradiction. Thus, $r'\in N[y']$, and as $r\in T'(N[x])\cap N[y]$, then $\sigma_{P,R}(r')\in T'(N[x])$, which distinguishes $N[x]$ and $N[y]$, a contradiction.

\begin{figure}[h!]
\centerline{\includegraphics[scale=1.5]{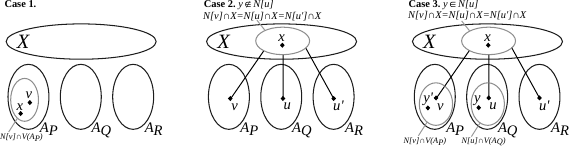}}
\caption{Cases 1, 2, and 3 in the forward direction of the proof of Lemma~\ref{lem:safe1}. Recall that $y\notin N[v]$.}
\label{fig:tdcases}
\end{figure}

That completes the case analysis, and thus, $T'$ is a positive NCTM for $\mathcal{B}$. Let $G':=G-A_P$ and let $\mathcal{B}'$ equal $\mathcal{B}$ restricted to $G'$. Since the only teaching sets of $T'$ containing vertices in $V(A_P)$ are those of the closed neighborhoods centered in vertices of $V(A_P)$, then $T'$ restricted to $\mathcal{B}'$ is a positive NCTM of size at most $k$ for $\mathcal{B}'$ in $G'$. This completes the first direction of the proof.

For the reverse direction, we do the (simpler) inverse of the above. Let $T'$ be a positive NCTM of size at most $k$ for $\mathcal{B}'$ in $G'$. Since there exists an automorphism $\sigma_{Q,R}:V(A_Q)\rightarrow V(A_R)$, then, for each $q\in V(A_q)$ such that $N[q]\in \mathcal{B}'$, $T'(N[q])\cap V(A_Q)\neq \emptyset$ and $T'(N[\sigma_{Q,R}(q)])\cap V(A_R)\neq \emptyset$. Indeed, otherwise $T'$ does not satisfy the non-clashing condition for that pair of closed neighborhoods. Note that the addition of $V(A_P)$ to $G'$ does not make any two closed neighborhoods that were the same in $G'$ become distinct in $G$. Hence, it is sufficient to extend $T'$ to a positive NCTM $T$ of size at most $k$ for $\mathcal{B}$ in $G$ as follows. For all $p\in V(A_P)$, let $p':=\sigma_{P,Q}(p)$. For all $x\in V(G')$, set $T(N[x]):=T'(N[x])$, and, for all $p\in V(A_P)$ such that $N[p]\in \mathcal{B}$, set $T(N[p]):=(T(N[p'])\cap X) \bigcup_{q\in T(N[p'])\cap V(A_Q)}\{\sigma^{-1}_{P,Q}(q)\}.$

Note that $T$ has size at most $k$ and is a positive teaching map for $\mathcal{B}$. We now show that $T$ is a positive NCTM for $\mathcal{B}$ in $G$. Since $T(N[x]):=T'(N[x])$ for all $x\in V(G')$, we only need to show that $T$ distinguishes $N[p]$ and $N[y]$ for all $p\in V(A_P)$ and $y\in V(G)$ such that $N[p],N[y]\in \mathcal{B}$. For all $p\in V(A_P)$, since $T(N[p])\cap V(A_P)\neq \emptyset$, $T$ satisfies the non-clashing condition for $N[p]$ and 
$N[y]$ for any $y\notin (V(A_P)\cup (N(p)\cap X))$. We now proceed with a case analysis.

\smallskip

\noindent \textbf{Case 1:} $y\in V(A_P)\setminus \{p\}$. Let $x\in T'(N[\sigma_{P,Q}(y)]) \cup T'(N[p'])$ distinguish $N[\sigma_{P,Q}(y)]$ and $N[p']$. If $x\in X$, then $x\in T(N[p])\cup T(N[y])$ distinguishes $N[p]$ and $N[y]$. Otherwise, $x\in V(A_Q)$, and thus, $\sigma^{-1}_{P,Q}(x)\in T(N[p])\cup T(N[y])$ distinguishes $N[p]$ and $N[y]$.

\smallskip

\noindent \textbf{Case 2:} $y\in N(p)\cap X$. If $x\in T'(N[p'])$ distinguishes $N[p']$ and $N[y]$, then either $x\in T(N[p])\cap X$ or $x\in V(A_Q)$ and $\sigma^{-1}_{P,Q}(x)\in T(N[p])$. In both cases, $T$ satisfies the non-clashing condition for $N[p]$ and $N[y]$. Otherwise, $x\in T'(N[y])$ distinguishes $N[p']$ and $N[y]$. Since $x\notin N[p']$ and $T'(N[y])\cap V(A_P)=\emptyset$, then $x\notin N[p]$, and thus, $x\in T(N[y])$ distinguishes $N[p]$ and $N[y]$.

This completes the case analysis. Hence, $T$ is a positive NCTM of size at most $k$ for $\mathcal{B}$ in $G$.
\end{proof}

\medskip

\begin{proofthmtd}
    Use the algorithm of~\cite{RRVS14} to compute a treedepth decomposition $\mathcal{T}$ of $G$ of depth $\mathtt{td}(G)$ in $2^{\OO(\mathtt{td}(G)^2)}\cdot |V(G)|$ time. From the bottom of $\mathcal{T}$ up, exhaustively apply Reduction rule~\ref{rr1}, which is safe by Lemma~\ref{lem:safe1}. Specifically, first consider the children of a vertex at depth $\mathtt{td}(G)-1$. These children (leaves of $\mathcal{T}$) form~$A$, and their ancestors in $\mathcal{T}$ form $X$. Exhaustively apply Reduction Rule~\ref{rr1} for the children of each vertex at depth $\mathtt{td}(G)-1$. Once Reduction Rule~\ref{rr1} cannot be applied, each vertex at depth $\mathtt{td}(G)-1$ has at most $g_1(\mathtt{td}(G))$ children, for a computable function $g_1$. Proceeding by induction on the depth $\mathtt{td}(G)-j$ of the considered vertices, suppose that, for some $1\leq i<\mathtt{td}(G)$, it holds that, for all $1 \leq j\leq i$, after exhaustively applying Reduction Rule~\ref{rr1}, each vertex at depth $\mathtt{td}(G)-j$ has at most $g_j(\mathtt{td}(G))$ children for a computable function $g_j$. Then, exhaustively applying Reduction Rule~\ref{rr1} at depth $\mathtt{td}(G)-i-1$ results in each vertex at depth $\mathtt{td}(G)-i-1$ having at most $g_{i+1}(\mathtt{td}(G))$ children for a computable function~$g_{i+1}$. See Fig.~\ref{fig:prune} for an illustration of this pruning process. In total, exhaustively applying Reduction Rule~\ref{rr1} as above takes \FPT\ time with respect to $\mathtt{td}(G)$ as we need to find the components $A_P$, $A_Q$, and $A_R$ each time. Once Reduction rule~\ref{rr1} cannot be applied at the root of $\mathcal{T}$, we have proven by induction that the remaining graph $G'$ has at most $f(\mathtt{td}(G))$ vertices, and thus, at most $f'(\mathtt{td}(G))$ edges, for computable~functions~$f$ and $f'$. Now, apply the algorithm of Theorem~\ref{thm:exact-upper} on $G'$, which runs in $f''(\mathtt{td}(G))$ time for a computable function $f''$ since $|E(G')|=f'(\mathtt{td}(G))$.
\end{proofthmtd}

\begin{figure}[ht]
\centering
\includegraphics[trim={0cm 0.05cm 0cm 0cm},clip,scale=2.2]{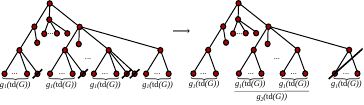}
\caption{Pruning of a treedepth decomposition as in the proof of Theorem~\ref{thm:fpt-td}.}
\label{fig:prune}
\end{figure}
We proceed with the vertex cover number parameterization for \prob. The \emph{vertex cover number} of a graph $G$ is the minimum size of a subset of vertices $X\subseteq V(G)$ such that each edge in $G$ is incident to at least one vertex in $X$. We emphasize that allowing negatively-labeled examples in the teaching sets poses a significant challenge as the problem is no longer local, which can be seen~by comparing the proof of the next result with the one for \textsc{Strict Non-Clash}~\cite{ChalopinCIR24}.

\begin{theorem}\label{thm:fpt-vc}
    \prob\ is \FPT\ parameterized by the vertex cover number of $G$.
\end{theorem}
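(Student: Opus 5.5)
Let $X$ be a minimum vertex cover of $G$, computed in $2^{\OO(|X|)}\cdot |V(G)|^{\OO(1)}$ time, and set $s:=|X|$. Every vertex of $I:=V(G)\setminus X$ has $N[v]=\{v\}\cup N(v)$ with $N(v)\subseteq X$. We partition $I$ into at most $2^s$ \emph{types}, according to $N(v)\subseteq X$ and whether $N[v]\in\mathcal{B}$. Vertices of the same type are false twins.

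\textbf{Step 1: bound $k$.} By the argument of Lemma~\ref{lem:twins}, a type class $S$ of size $m$ whose closed neighborhoods are all in $\mathcal{B}$ needs about $\binom{m}{2}$ pairs distinguished. Within $S$, the pair $N[u],N[u']$ is distinguished only by $u$ or $u'$. So if $k$ is small relative to $m$, many members $u\in S$ must have $u\in T(N[u])$. Conversely, the map $T(N[v])=\{v\}\cup(\text{a distinguishing set in }X)$ shows that NCTD$(\mathcal{B})\le s+1$. Indeed, distinct closed neighborhoods of vertices in $I$ differ either at the center or on $X$, and for centers in $X$ one can use $X$-vertices plus at most one more vertex. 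So if $k\geq s+1$ we answer YES. Hence we may assume $k\le s$.

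\textbf{Step 2: kernel via a twin reduction rule.} I will show that if a type class $S$ with $N[\cdot]\in\mathcal{B}$ has more than $g(s)$ members, one member $v$ can be deleted safely, for some computable $g$. The plan mirrors Lemma~\ref{lem:safe1}. First, normalize an NCTM $T$ of size $k\le s$. Each teaching set then consists of at most $s$ vertices, and its intersection with $I$ can be described by which types it hits, with what label and multiplicity. This gives at most $h(s)$ distinct ``profiles'' of teaching sets relative to $X$ and the types. Pigeonhole then yields three members of $S$ with identical profiles whose own center is in their teaching set. The normalization step uses that a vertex $w\in I$ placed in $T(N[x])$ behaves identically, for every neighborhood other than those of its own twins, to any other member of its type. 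We therefore re-route references to the deleted vertex $v$ onto a twin $v'$ that is not otherwise used, exactly as in the $\sigma_{P,Q}$/$\sigma_{P,R}$ swap of Lemma~\ref{lem:safe1}. The reverse direction copies the teaching set of a twin with the center replaced by the new vertex. Type classes whose closed neighborhoods are not in $\mathcal{B}$ are handled similarly; only their role as examples matters, so keeping about $s+1$ members of each suffices.

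\textbf{Step 3: finish.} After the reductions, $|V(G)|\le s+2^s\cdot g(s)$, so we brute-force over all teaching maps. This takes $2^{\OO(k|V|\log|V|)}$ time, which is \FPT\ in $s$.

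\textbf{Main obstacle.} The main obstacle is the non-locality of negative examples in Step 2. A negative example $w\notin N[x]$ may be any vertex of $G$, so many teaching sets can reference members of the class $S$ as negatives. Deleting $v$ then forces replacing $v$ by a twin in arbitrarily many teaching sets simultaneously, and the replacement must not collide with that twin's own role. My first attempt is to prove that, without loss of generality, every teaching set uses at most one representative per type as a negative example, chosen canonically (e.g., from a fixed set of $s+2$ "reserved" members). With that canonical choice, the remaining members of a large class are referenced only by their own and their twins' teaching sets, which makes deleting one of them safe.
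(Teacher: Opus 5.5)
\textbf{There is a genuine gap: Step~2 is not proved.} Your architecture matches the paper's: bound $k$ by a function of $s$, shrink large false-twin classes of $I$ whose neighbourhoods lie in $\mathcal{B}$, collapse the remaining twin classes, and brute-force the kernel. But the heart of the proof is that deleting one member $v$ of such a large class $S$ is safe, and you do not establish it. The normalization you propose to get there does not hold up.

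Swapping a vertex $w\in S$ in $T(N[x])$ for a twin $w'$ is harmless only for pairs $N[x],N[y]$ with $x,y\notin\{w,w'\}$.
\begin{itemize}
\item As a negative example, $w$ may be the separator that $T$ actually uses for the pair $N[x],N[w]$. This happens whenever $w\notin T(N[w])$, e.g.\ a cyclic pattern $T(N[u_i])=\{u_{i+1}\}$ inside a class. Which members are in this situation depends on $T$, and there can be up to $2k+1$ of them per class. You give no argument that $T$ can be restructured so that all negatives come from a fixed reserved set of size $s+2$.
\item As a positive example in $T(N[x])$ with $x\in N(S)$, $w$ separates $N[x]$ from the neighbourhoods of the other members of $S$. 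Your normalization does not treat positive references at all.
\end{itemize}
So your closing claim, that unreserved members are ``referenced only by their own and their twins' teaching sets, which makes deleting one of them safe'', is exactly what must be proved. Suppose $v\in T(N[x])$ for some $x\in N(v)$ and $v$ was the only separator of $N[x]$ and $N[u]$ for a twin $u$. Then substituting $u$ does nothing, since $u\in N[x]\cap N[u]$.

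\textbf{The missing idea.} No normalization is needed once the deleted twin and its substitute have matching teaching sets. Using the counting you sketch (at most $2k+1$ members of a class avoid their own teaching set) and pigeonhole, Lemma~\ref{lem:kernel-centers} gives $u,v\in S$ with:
\begin{itemize}
\item $u\in T(N[u])$ and $v\in T(N[v])$;
\item $T(N[u])\cap X=T(N[v])\cap X$;
\item $T(N[u])$ meets $S\setminus\{u\}$ iff $T(N[v])$ meets $S\setminus\{v\}$, and both meet the same other classes.
\end{itemize}
In Lemma~\ref{lem:safe2}, $v$ is replaced by $u$ in every other teaching set, or by a spare third twin if $u$ is already present. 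Since $v$ teaches itself, only pairs $N[x],N[y]$ with $x\in N(v)$ and $v\notin N[y]$ can lose their separator. The vertex $u$ separates these unless $y=u$. In that case $T$ separates $N[x]$ and $N[v]$ by some $r\neq v$, and a short case analysis using the profile match converts $r$ into a separator of $N[x]$ and $N[u]$ under the new map.

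\textbf{A smaller problem in Step~1.} Your justification there is wrong. $X$-centres with equal trace on $X$ are separated only by $I$-vertices, and one extra vertex per centre is not enough. For example, take $X=\{x_A: A\subseteq[d]\}$ a clique and $I=\{c_1,\dots,c_d\}$ with $c_kx_A\in E(G)$ iff $k\in A$. The $d2^{d-1}$ pairs $N[x_A],N[x_{A\cup\{k\}}]$ each have symmetric difference $\{c_k\}$. That is more than one $I$-vertex per centre can cover once $d\ge 3$. Any computable bound suffices here, e.g.\ the paper's $2^{s+1}+s$, obtained by giving each $X$-centre all of $X$ plus two representatives of every class.
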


We prove Theorem~\ref{thm:fpt-vc} via a kernelization algorithm\footnote{A kernelization algorithm for a parameterized problem $\Pi$ with input $I$ and parameter $p$, runs in polynomial time and transforms an instance $(I,p)$ of $\Pi$ into an equivalent instance $(I',p')$ of $\Pi$ such that $|I'|,p'\leq f(p)$, for a computable function $f$. $\Pi$ is \FPT\ parameterized by $p$ if and only if it admits a kernelization algorithm.} which exhaustively applies two reduction rules. However, we first show that the solution size $k$ can be bounded above by a function of the vertex cover number of $G$ as this is a necessary condition for one of the reduction rules. To do so, we consider equivalence classes for the vertices in the independent set with respect to the vertex cover. Given a graph $G$, let $X\subseteq V(G)$ be a vertex cover of $G$, and $I:=V(G)\setminus X$. Any two distinct vertices $u,v\in I$ are in the same \textit{equivalence class} $\mathcal{S}\subseteq I$ if and only if $u$ and $v$ are false twins. Note that there are at most $2^{|X|}$ distinct equivalence classes for the vertices in $I$ with respect to $X$. 

\begin{lemma}\label{lem:bound-k}
    Given a graph $G$ and a vertex cover $X\subseteq V(G)$ of $G$, for any $\mathcal{B}$ consisting of closed neighborhoods of $G$, it holds that NCTD$(\mathcal{B})\leq 2^{|X|+1}+|X|$.
\end{lemma}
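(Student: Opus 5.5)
The plan is to write down an explicit NCTM $T$ of size at most $2^{|X|+1}+|X|$ and check the non-clashing condition pair by pair. The idea is that every teaching set labels all vertices of $X$, so it fixes the trace $N[\cdot]\cap X$. The remaining examples then only have to separate concepts that agree on $X$.

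Let $I:=V(G)\setminus X$, which is an independent set. The fact the construction relies on is that for every $u\in X$, the set $N[u]\cap I=N(u)\cap I$ is a union of whole equivalence classes. Indeed, false twins in $I$ have the same neighbours in $X$, so $u$ is adjacent to one member of a class if and only if it is adjacent to all of them. For each equivalence class $\mathcal{S}$, fix a set $R_{\mathcal{S}}\subseteq\mathcal{S}$ of representatives, consisting of two vertices if $|\mathcal{S}|\ge 2$ and of the unique vertex otherwise. The teaching sets are as follows, with each example labelled positively if and only if it lies in the concept:
\begin{itemize}
\item for $N[v]\in\mathcal{B}$ with $v\in I$, let $T(N[v])$ consist of all vertices of $X$ together with $v$;
\item for $N[u]\in\mathcal{B}$ with $u\in X$, let $T(N[u])$ consist of all vertices of $X$ together with $\bigcup_{\mathcal{S}}R_{\mathcal{S}}$.
\end{itemize}
Since there are at most $2^{|X|}$ classes, every teaching set has size at most $|X|+2^{|X|+1}$. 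Every example is labelled consistently with its concept by definition, so only non-clashing remains to be checked.

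Take distinct concepts $N[a]\neq N[b]$ in $\mathcal{B}$. If $N[a]\cap X\neq N[b]\cap X$, then some vertex of $X$ lies in exactly one of them, and that vertex is in both teaching sets. So assume the traces on $X$ agree; there are three cases.
\begin{itemize}
\item \emph{Both $a,b\in I$.} The example $a\in T(N[a])$ lies in $N[a]$, and $a\notin N[b]$ because $I$ is independent and $a\neq b$.
\item \emph{Both $a,b\in X$.} Their traces on $I$ are distinct unions of classes, so some class $\mathcal{S}$ lies in exactly one of them. Any representative in $R_{\mathcal{S}}\subseteq T(N[a])$ then distinguishes the two concepts.
\item \emph{Mixed case, $a\in X$ and $b\in I$.} This is the one subtle case. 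Since $a\in N[a]\cap X=N[b]\cap X=N(b)$, the vertex $b$ is adjacent to $a$, so its whole class $\mathcal{S}$ lies in $N[a]$. On the other side, $N[b]\cap I=\{b\}$. If $|\mathcal{S}|\ge 2$, then $R_{\mathcal{S}}$ contains a representative other than $b$; it lies in $N[a]$ but not in $N[b]$. If $\mathcal{S}=\{b\}$, then $N[a]\cap I\supsetneq\{b\}$, since otherwise $N[a]=N[b]$. So $N(a)$ contains a further class $\mathcal{S}'$, and a representative of $\mathcal{S}'$ lies in $N[a]\setminus N[b]$.
\end{itemize}

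The main obstacle is exactly this mixed case. It is the reason for taking two representatives per class rather than one, which is where the factor $2^{|X|+1}$ in the bound comes from.
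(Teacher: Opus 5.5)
Your proposal is correct and follows the paper's proof essentially verbatim. It uses the same teaching map: all of $X$ plus the center itself for centers in $I$, and all of $X$ plus two representatives per false-twin class for centers in $X$. It also uses the same three-case check, and your split of the mixed case on whether the class of $b$ is a singleton just makes explicit why some vertex of $T(N[a])\cap(I\setminus\{b\})$ distinguishes the pair, which the paper only asserts.
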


\begin{proof}
    It suffices to prove it for $\mathcal{B}=\{N[v]\mid v\in V(G)\}$ as the upper bound then holds for all $\mathcal{B'}\subseteq \mathcal{B}$. We construct an NCTM $T$ of size at most $2^{|X|+1}+|X|$ for $\mathcal{B}$. Let $\mathcal{S}_1,\ldots,\mathcal{S}_p$ be the distinct equivalence classes of size at least $1$ of $I:=V(G)\setminus X$ with respect to~$X$. For each $x\in X$, set $T(N[x]):=X\cup \{s_1,\ldots,s_p\} \cup \{t_1,\ldots,t_p\}$, where $s_i$ and $t_i$ are any two distinct vertices in $\mathcal{S}_i$ for all $i\in [p]$ (if $|S_i|=1$, then say that $t_i$ does not exist). For each $y\in I$, set $T(N[y]):=X\cup \{y\}$. Since $p\leq 2^{|X|}$, $T$ has size at most $2^{|X|+1}+|X|$. We now prove that $T$ is an NCTM for $\mathcal{B}$. Consider the pair $N[u]$ and $N[v]$ for any two distinct vertices $u,v\in V(G)$. If $u,v\in I$, then $T$ is non-clashing for $N[u]$ and $N[v]$ as $y\in T(N[y])$ for all $y\in I$. If $u\in X$ and $v\in I$, then $T$ is non-clashing for $N[u]$ and $N[v]$ since if $N[u]\neq N[v]$, then either $v\in T(N[v])$ or some vertex in $X$ distinguishes $N[u]$ and $N[v]$, or some vertex in $T(N[u])\cap (I\setminus \{v\})$ distinguishes them. Lastly, if $u,v\in X$ and $N[u]\neq N[v]$, then some vertex in $X$ is only in one of $N[u]$ and $N[v]$ and/or there exists an equivalence class $\mathcal{S}$ of $I$ such that all of $\mathcal{S}$ is in $N[u]$ or $N[v]$ and none of~$\mathcal{S}$ is in the other. In the first case, $T$ is non-clashing for $N[u]$ and $N[v]$ as $X\subseteq T(N[x])$ for all $x\in X$. In the second case, $T$ is non-clashing for $N[u]$ and $N[v]$ as $\{s_1,\ldots,s_p\}\subseteq T(N[x])$ for all $x\in X$. 
\end{proof}

\begin{reduction rule}\label{rr2}
Let $G$ be a graph with a vertex cover $X\subseteq V(G)$ of $G$ and let $q=2^{2^{|X|}+|X|}+1$. If $k<2^{2|X|+1}+|X|$ and there exist $q+2k+1$~vertices in $I:=V(G)\setminus X$ that are pairwise false twins whose closed neighborhoods are in $\mathcal{B}$, then delete one of them.
\end{reduction rule}

\begin{reduction rule}\label{rr3}
Given a graph $G$ and a vertex cover $X\subseteq V(G)$ of $G$, if there exists a pair $u,v\in V(G)$ of false twins in $I:=V(G)\setminus X$ such that $N[u],N[v]\notin \mathcal{B}$, then delete $v$.
\end{reduction rule}

The next lemma is the key to proving Reduction rule~\ref{rr2} is safe. Roughly, it says that if there are too many false twins in $I$ whose closed neighborhoods are in $\mathcal{B}$, then, by the pigeonhole principle, two of them will have almost ``identical'' teaching sets. This will facilitate the deletion of one of them.

\begin{lemma}\label{lem:kernel-centers}
    Let $G$ be a graph with a vertex cover $X\subseteq V(G)$ of $G$, let $u_1,\ldots,u_{\ell}$ be a set of pairwise false twins in the equivalence class $Q$ of $I:=V(G)\setminus X$ such that $N[u_1],\ldots,N[u_{\ell}]\in \mathcal{B}$, let $k\leq 2^{2|X|+1}+|X|$, and let $q=2^{2^{|X|}+|X|}+1$. For any NCTM $T$ of size $k$ for $\mathcal{B}$, if $\ell> q+2k$, then there exist $i,j\in [\ell]$ such that $u_i\in T(N[u_i])$, $u_j\in T(N[u_j])$, $T(N[u_i])\cap X=T(N[u_j])\cap X$, $T(N[u_i])\cap (Q\setminus \{u_i\})=\emptyset$ if and only if $T(N[u_j])\cap (Q\setminus \{u_j\})=\emptyset$, and $T(N[u_i])\cap \mathcal{S}=\emptyset$ if and only if $T(N[u_j])\cap \mathcal{S}=\emptyset$ for any equivalence class $\mathcal{S}\neq Q$ of $I$ with respect to $X$.
\end{lemma}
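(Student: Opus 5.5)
Two steps: first show that all but at most $2k$ of the twins $u_1,\ldots,u_\ell$ satisfy $u_i\in T(N[u_i])$; then apply pigeonhole to the remaining ones over a bounded number of ``types''.

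\textbf{Step 1: few twins omit themselves.} Let $M=\{i\in[\ell] : u_i\notin T(N[u_i])\}$. For distinct $i,j$, the sets $N[u_i]$ and $N[u_j]$ differ only on $\{u_i,u_j\}$. Hence a vertex distinguishing them must be $u_i$ or $u_j$. Every vertex of $X$ or of another class lies in both sets or in neither, and every $u_p$ with $p\neq i,j$ lies in neither.

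So for each pair $i,j\in M$, one of $u_i,u_j$ must lie in $T(N[u_i])\cup T(N[u_j])$. Since $i,j\in M$, this means $u_j\in T(N[u_i])$ or $u_i\in T(N[u_j])$. Orient such a pair as an arc from $i$ to $j$ whenever $u_j\in T(N[u_i])$.

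Every pair in $M$ gets at least one arc, so the number of arcs is at least $\binom{|M|}{2}$. Each $i$ has out-degree at most $|T(N[u_i])|\le k$, so the number of arcs is at most $k|M|$. Therefore $\binom{|M|}{2}\le k|M|$, which gives $|M|\le 2k+1$; a slightly sharper count gives $|M|\le 2k$, as in Lemma~\ref{lem:twins}. Consequently at least $\ell-2k>q$ indices satisfy $u_i\in T(N[u_i])$.

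\textbf{Step 2: pigeonhole on types.} To each remaining index $i$ assign the type
\[\bigl(T(N[u_i])\cap X,\ [T(N[u_i])\cap(Q\setminus\{u_i\})=\emptyset],\ ([T(N[u_i])\cap\mathcal{S}=\emptyset])_{\mathcal{S}\neq Q}\bigr).\]
The first coordinate takes at most $2^{|X|}$ values. The indicator for $Q$ takes $2$ values. There are at most $2^{|X|}-1$ classes $\mathcal{S}\neq Q$, contributing at most $2^{2^{|X|}-1}$ values. In total there are at most $2^{|X|}\cdot 2\cdot 2^{2^{|X|}-1}=2^{2^{|X|}+|X|}=q-1$ types.

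Since more than $q$ indices remain, two of them, say $i$ and $j$, share a type. These $i,j$ satisfy every condition in the statement.

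\textbf{Main obstacle.} The hard part is making Step 1 tight, so that $\ell>q+2k$ is enough rather than needing $\ell>q+2k+1$. The crude count gives $|M|\le 2k+1$. To sharpen it, note that each element of $T(N[u_i])$ yields at most one arc, and elements lying in $X$ or in other classes yield none. If the crude bound is off by one, the slack is in fact large: $q$ exceeds the number of types by one, so $\ell-(2k+1)\ge q$ still leaves more than $q-1$ indices, which is enough for pigeonhole. So I would simply use $|M|\le 2k+1$ and compare against the type count $q-1$.
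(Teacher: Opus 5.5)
Your argument is correct and is essentially the paper's proof. The paper likewise bounds the self-omitting twins by comparing the $\binom{m}{2}$ pairs among them (each distinguishable only by $u_i$ or $u_j$) with the at most $mk$ pairs their teaching sets can cover. This leaves at least $q$ twins with $u_p\in T(N[u_p])$, and the paper then applies pigeonhole over the $2^{2^{|X|}+|X|}=q-1$ types.

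The fallback $|M|\le 2k+1$ that you settle on is exactly what is needed. Your aside that a sharper count gives $|M|\le 2k$ is false. For example, three isolated twins with $T(N[u_1])=\{u_2\}$, $T(N[u_2])=\{u_3\}$, $T(N[u_3])=\{u_1\}$ form an NCTM of size $1$ with $|M|=3$, and Lemma~\ref{lem:twins} only yields $|M|\le 3=2k+1$. So your claims ``$\ell-2k>q$'' and ``more than $q$ indices remain'' should read ``at least $q$''.
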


\begin{proof}
    First, we prove that, for any NCTM $T$ of size $k$ for $\mathcal{B}$, if there exist $q$ distinct indices $p\in [\ell]$ such that $u_p\in T(N[u_p])$, then there exist distinct $i,j\in [\ell]$ such that $u_i\in T(N[u_i])$, $u_j\in T(N[u_j])$, $T(N[u_i])\cap X=T(N[u_j])\cap X$, $T(N[u_i])\cap (Q\setminus \{u_i\})=\emptyset$ if and only if $T(N[u_j])\cap (Q\setminus \{u_j\})=\emptyset$, and $T(N[u_i])\cap \mathcal{S}=\emptyset$ if and only if $T(N[u_j])\cap \mathcal{S}=\emptyset$ for any equivalence class $\mathcal{S}\neq Q$ of $I$ with respect to $X$. So, assume that there exist $q$ distinct indices $p\in [\ell]$ such that $u_p\in T(N[u_i])$. Since there are at most $2^{|X|}$ equivalence classes of $I$ with respect to $X$, there are at most $2^{2^{|X|}}$ distinct subsets of these equivalence classes. Hence, there are at most $2^{2^{|X|}+|X|}$ distinct subsets of $X$ and the equivalence classes. Thus, the existence of $u_i$ and $u_j$ with the above properties follows by the pigeonhole principle since $q=2^{2^{|X|}+|X|}+1>2^{2^{|X|}+|X|}$.   
    
    It remains to prove that, for any NCMT $T$ of size $k$ for $\mathcal{B}$, if $\ell>q+2k$, then there exist $q$~distinct indices $p\in [\ell]$ such that $u_p\in T(N[u_p])$. Toward a contradiction, let $T$ be an NCTM of size $k$ for $\mathcal{B}$, let $\ell>q+2k$, and suppose there are at most $q-1$ distinct indices $p\in [\ell]$ such that $u_p\in T(N[u_p])$. For all $i\in [\ell]$, if $u_i\notin T(N[u_i])$, then each of the at most $k$ vertices in $T(N[u_i])$ distinguishes $N[u_i]$ and at most one other closed neighborhood in $\{N[u_1],\ldots,N[u_{\ell}]\}$. Thus, w.l.o.g., for all $i\in [q-1]$, let $u_i\in T(N[u_i])$. By the above analysis, the teaching sets of $N[u_{q}],\ldots,N[u_{\ell}]$ combined distinguish at most $(\ell-q+1)k$ pairs in $\{N[u_{q}],\ldots,N[u_{\ell}]\}$. Note that $(\ell-q+1)(\ell-q)/2$ such pairs must be distinguished and
    \begin{align*}
    (\ell-q+1)(\ell-q)/2>(\ell-q+1)k
    \iff (\ell-q)/2>k
    \iff \ell>q+2k.
    \end{align*}
Thus, since $\ell>q+2k$, $T$ does not distinguish a pair in $\{N[u_q],\ldots,N[u_{\ell}]\}$, a contradiction.
\end{proof}

\begin{lemma}\label{lem:safe2}
    Reduction rule~\ref{rr2} is safe for \prob.
\end{lemma}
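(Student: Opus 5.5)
We prove both directions separately. Throughout, let $u_1,\ldots,u_\ell$ with $\ell\geq q+2k+1$ be the pairwise false twins in the equivalence class $Q$ of $I$ whose closed neighborhoods are in $\mathcal{B}$. Let $G':=G-u$ for the deleted twin $u$, and let $\mathcal{B}'$ be $\mathcal{B}$ without $N[u]$, with all sets restricted to $V(G')$.

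\textbf{Forward direction.} Let $T$ be an NCTM of size $k$ for $\mathcal{B}$. Note that $k<2^{2|X|+1}+|X|$, so Lemma~\ref{lem:kernel-centers} applies. It yields distinct $i,j$ such that $u_i\in T(N[u_i])$, $u_j\in T(N[u_j])$, and $T(N[u_i])$, $T(N[u_j])$ agree on $X$ and on which classes $\mathcal{S}\neq Q$ (and $Q$ minus the center) they meet. Since the twins are interchangeable, we may relabel so that $u=u_i$ is the deleted vertex; deleting $u_i$ rather than an arbitrary twin is harmless by the automorphism of $G$ swapping $u$ and $u_i$.

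We then build $T'$ from $T$ in two steps. First, drop $N[u_i]$. Second, for every other teaching set containing $u_i$, replace $u_i$ by a vertex playing the same role. For a teaching set of some $N[x]$ with $x\in X$, $u_i\in N[x]$ iff $u_j\in N[x]$, so we substitute a twin of $u_i$ in $Q$ that is not the center and not already in the set. Such a twin exists because $\ell$ exceeds $k+2$.

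The main obstacle is checking that every pair of concepts that was distinguished only via $u_i$ remains distinguished. Since $u_i\in Q$, it distinguishes $N[a]$ and $N[b]$ only if exactly one of $a,b$ lies in $N(u_i)\cup\{u_i\}$. The cases are:
\begin{itemize}
\item one of $a,b$ is $u_i$ itself, which is gone from $\mathcal{B}'$;
\item $a\in N(u_i)\subseteq X$ and $b\notin N(Q)$;
\item $a=u_p$ for some $p\neq i$, with $b$ arbitrary.
\end{itemize}
In the second case, any other member of $Q$ that is not $b$ also distinguishes the pair, so the replacement works. In the third case, $u_i$ is adjacent only to $X$, so $u_i\notin N[u_p]$ for $p\neq i$ and $u_i$ cannot distinguish the pair inside $G'$. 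Thus the pair must already be distinguished by something else, or by a replacement twin. A replaced twin $w$ could fail only if it equals $a$ or $b$; we avoid this by choosing $w$ outside the at most $2$ relevant centers, which the slack of $2k$ twins guarantees.

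\textbf{Reverse direction.} Let $T'$ be an NCTM of size $k$ for $\mathcal{B}'$. Reinsert $u$ and copy the structure of a twin. Lemma~\ref{lem:kernel-centers}, applied in $G'$ (where there are still $\ell-1>q+2k$ twins), gives some $u_j$ with $u_j\in T'(N[u_j])$. Define $T(N[u]):=(T'(N[u_j])\setminus\{u_j\})\cup\{u\}$, and keep all other teaching sets unchanged, since adding $u$ changes no existing set's membership on $V(G')$ other than adding $u$.

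It remains to check the pairs involving $N[u]$:
\begin{itemize}
\item Against another twin $u_p$, the vertex $u\in T(N[u])$ distinguishes them.
\item Against $N[x]$ with $x\notin N(u)$, again $u$ distinguishes them.
\item Against $N[x]$ with $x\in N(u)\subseteq X$, we use that $T'$ distinguishes $N[u_j]$ and $N[x]$ with some vertex $r$.
\end{itemize}
In the last case, if $r\neq u_j$, then $r$ behaves identically for $u$, since $u$ and $u_j$ are twins and $r\in T(N[u])$ or $r\in T(N[x])$. If $r=u_j$, then $u_j\in N[x]$ as well, so $r=u_j$ cannot distinguish them. The only remaining concern is that removing $u_j$ from $T(N[u])$ loses nothing, and the argument above confirms this. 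This completes the plan for safeness.
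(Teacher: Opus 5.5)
Your reverse direction is correct and is essentially the paper's. The forward direction, however, has a genuine gap. You invoke Lemma~\ref{lem:kernel-centers} but never use its conclusion, and your rule for choosing the substitute does not work.

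\textbf{Why the substitute choice fails.} The substitute $w$ is fixed per teaching set, not per pair. Suppose $x\in N(Q)$ and $u_i\in T(N[x])$. Then $u_i$ may be the only vertex distinguishing $N[x]$ from $N[u_p]$ for \emph{many} twins $u_p$ at once. Choosing $w=u_p$ fails for the pair $N[x],N[u_p]$, since $w\in N[x]\cap N[u_p]$, unless $T(N[u_p])$ distinguishes that pair on its own. So the forbidden choices are not ``at most $2$ relevant centers''; they can be almost all of $Q$, and the slack of $2k$ twins does not help. Your third bullet is just the second with $a$ and $b$ swapped. The remark that $u_i$ cannot distinguish anything in $G'$ is vacuous, and ``the pair must already be distinguished'' is exactly what needs proof.

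\textbf{A concrete failure.} Let $X=\{x\}$, let $u_1,\ldots,u_{14}$ be adjacent only to $x$, let $\mathcal{B}$ contain all closed neighborhoods, and let $k=2$. Set $T(N[x])=\{u_1\}$, $T(N[u_1])=\{u_1,u_2\}$, $T(N[u_2])=\{u_2,u_3\}$, and $T(N[u_p])=\{u_p\}$ for $p\geq 3$. This is an NCTM, Reduction rule~\ref{rr2} applies (here $q=9$), and $(i,j)=(1,2)$ is a legitimate output of Lemma~\ref{lem:kernel-centers}. Delete $u_1$ and put, say, $u_5$ into $T(N[x])$. Then $N[x]$ and $N[u_5]$ are undistinguished in $G'$, although they differ there. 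The only substitute that works is $u_2=u_j$.

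\textbf{The missing idea (the paper's).} Substitute $u_j$ itself whenever $u_j$ is not already in the set, and otherwise a twin outside $\{u_i,u_j\}$ not yet in the set. Then, for $x\in N(Q)$, every pair $N[x],N[b]$ with $b\neq u_j$ is handled as in your second case. The only pair at risk is $N[x],N[u_j]$. For it, take $r\neq u_i$ in $T(N[x])\cup T(N[u_i])$ distinguishing $N[x]$ from $N[u_i]$; it exists because $u_i\in N[x]\cap N[u_i]$.
\begin{itemize}
\item If $r\in T(N[u_i])$, transfer it to $T(N[u_j])$ using precisely the lemma's matching properties: the two teaching sets agree on $X$, on whether they meet $Q$ minus their center, and on which other classes they meet. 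If the matched vertex is $u_i$, it was replaced by a twin outside $\{u_i,u_j\}$.
\item If $r\in T(N[x])$ and $r\neq u_j$, then $r$ also distinguishes $N[x]$ from $N[u_j]$.
\item If $r=u_j\in T(N[x])$, then $u_i$ lies in $T(N[x])$ or $T(N[u_j])$, because the pair relied on it. Since $u_j$ is already in that set, $u_i$ was replaced there by a twin outside $\{u_i,u_j\}$, and that twin distinguishes the pair.
\end{itemize}

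\textbf{A minor slip in the reverse direction.} After the deletion you have only $\ell-1\geq q+2k$ twins, not $>q+2k$. This is harmless: you only need one $u_j$ with $u_j\in T'(N[u_j])$, and the counting in Lemma~\ref{lem:kernel-centers} gives this once there are at least $2k+2$ twins.
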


\begin{proof}
Let $Q\subseteq I$ be a set of $q+2k+1$ pairwise false twins, and let $T$ be an NCTM of size at most~$k$ for~$\mathcal{B}$. By Lemma~\ref{lem:kernel-centers}, there exist $u,v\in Q$ such that $u\in T(N[u])$, $v\in T(N[v])$, $T(N[u])\cap X=T(N[v])\cap X$, $T(N[u])\cap (Q\setminus \{u\})=\emptyset$ if and only if $T(N[v])\cap (Q\setminus \{v\})=\emptyset$, and $T(N[u])\cap \mathcal{S}=\emptyset$ if and only if $T(N[v])\cap \mathcal{S}=\emptyset$ for any equivalence class $\mathcal{S}\neq Q$ of $I$ with respect to $X$. Select any vertex $w\in V(G)\setminus \{v\}$ such that $v\in T(N[w])$. We show that removing $v$ from $T(N[w])$ and adding $z\in Q$ to $T(N[w])$ maintains~that $T$ is an NCTM of size at most~$k$ for $\mathcal{B}$ in $G$. Specifically, if $u\notin T(N[w])$, then $z=u$. Otherwise, $z$ is any vertex in $Q\setminus \{u,v\}$ such that $z\notin T(N[w])$ (if $Q\setminus \{v\}\subseteq T(N[w])$, then no vertex is added to $T(N[w])$). 

Let $T'$ be the teaching map for $\mathcal{B}$ obtained from $T$ by applying the above procedure for all $w\in V(G)\setminus \{v\}$ such that $v\in T(N[w])$. Note that $T'$ has size at most~$k$. We show that $T'$ is an NCTM for~$\mathcal{B}$. Toward a contradiction, suppose there exist $x,y\in V(G)$ such that $T'$ is not non-clashing for $N[x]$ and $N[y]$. Due to how $T'$ was obtained, and as $T$ is an NCTM for $\mathcal{B}$ while $T'$ is not, then $v$ is the only vertex in $T(N[x])\cup T(N[y])$ distinguishing $N[x]$ and $N[y]$. So, $u\in T'(N[x])\cup T'(N[y])$ and, w.l.o.g., $v\in N[x]$ and $v\notin N[y]$. We do a case analysis (see Fig.~\ref{fig:vc_cases}).

\smallskip

\noindent\textbf{Case 1:} $x=v$. Here, $v\in T'(N[x])$ distinguishes $N[x]$ and $N[y]$, a contradiction.

\smallskip

\noindent\textbf{Case 2:} $x\in N(v)$ and $y\neq u$. As $v\in N[x]$, $v\notin N[y]$, $u$ and $v$ are false twins, and $y\neq u$,~then $u\in N[x]$ and $u\notin N[y]$. So, $u\in T'(N[x])\cup T'(N[y])$ distinguishes $N[x]$ and $N[y]$, a contradiction.

\smallskip

\noindent\textbf{Case 3:} $x\in N(v)$ and $y=u$. As in Case~2, it must be that $u\in N[x]$. Since $v\in N[x]$, there exists $r\in (T(N[x])\cup T(N[v]))\setminus \{v\}$ that distinguishes $N[x]$ and $N[v]$. We now study subcases.

\smallskip

\noindent\textbf{Case 3.1:} $r\in T(N[v])\cap X$. Here, $r\in T'(N[u])\cap X$ distinguishes $N[x]$ and $N[u]$, a contradiction. 

\smallskip

\noindent\textbf{Case 3.2:} $r\in T(N[v])\cap \mathcal{S}$ for an equivalence class $\mathcal{S}$ of $I$ with respect to $X$. If $r=u$, then $\mathcal{S}=Q$ and there exists $z\in (T'(N[u])\cap Q)\setminus \{u,v\}$ that distinguishes $N[x]$ and $N[u]$, a contradiction. Otherwise, there is $r'\in (T'(N[u])\cap \mathcal{S})\setminus \{u,v\}$ that distinguishes $N[x]$ and $N[u]$, a contradiction.

\smallskip

\noindent\textbf{Case 3.3:} $r\in T(N[x])$. In this case, $r\in T'(N[x])$. If $r\neq u$, then $r\in T'(N[x])$ distinguishes $N[x]$ and $N[u]$, a contradiction. Otherwise, $r=u$. Since we have dealt with all the other cases, we can further assume that $r=u$ is the only vertex in $(T(N[x])\cup T(N[v]))\setminus \{v\}$ that distinguishes $N[x]$ and $N[v]$. Due to this property of $r$ and the fact that $T$ satisfies the non-clashing condition for $N[x]$ and $N[u]$, either $v\in T(N[x])$ or some vertex $r'\in T(N[u])\setminus \{u\}$ distinguishes $N[x]$ and $N[u]$. In the former case, as $u,v\in T(N[x])$, there exists $z\in (T'(N[x])\cap Q)\setminus \{u,v\}$ that distinguishes $N[x]$ and $N[u]$, a contradiction. In the latter case, if $r'\neq v$, then  $r'$ is also in $T'(N[u])$, and so, it distinguishes $N[x]$ and $N[u]$, a contradiction. Otherwise, $r'=v$, and hence, there exists $z\in (T'(N[x])\cap Q)\setminus \{u,v\}$ that distinguishes $N[x]$ and $N[u]$, a contradiction.

\begin{figure}[h!]
\centerline{\includegraphics[scale=1.67]{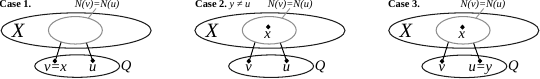}}
\caption{Cases 1, 2, and 3 in the forward direction of the proof of Lemma~\ref{lem:safe2}. Recall that $y\notin N[v]$.}
\label{fig:vc_cases}
\end{figure}

That completes the case analysis, and thus, $T'$ is an NCTM for $\mathcal{B}$. Let $G':=G\setminus \{v\}$ and let $\mathcal{B}'$ equal $\mathcal{B}$ restricted to $G'$. Since the only teaching set of $T'$ containing $v$ is $T(N[v])$, then $T'$ restricted to $\mathcal{B}'$ is an NCTM of size at most $k$ for $\mathcal{B}'$ in $G'$. This completes the first direction of the proof.

For the reverse direction, let $T'$ be an NCTM of size at most $k$ for $\mathcal{B}'$ in $G'$. By Lemma~\ref{lem:kernel-centers}, there exists $u\in V(G')$ such that $u\in T'(N[u])$ and $u$ and $v$ are false twins in $G$. Adding $v$ to $G'$ does not make any two closed neighborhoods that were the same in $G'$ become distinct in $G$. Hence, it is sufficient to extend $T'$ to an NCTM $T$ of size at most $k$ for $\mathcal{B}$ in $G$ as follows. Set $T(N[v]):=\{v\}\cup (T'(N[u])\setminus \{u\})$, and, for all $x\in V(G')$, set $T(N[x]):=T'(N[x])$. Note that $T$ has size at most $k$. We now show that $T$ is an NCTM for $\mathcal{B}$ in~$G$. As $T(N[x]):=T'(N[x])$ for all $x\in V(G')$, we only need to show that $T$ distinguishes $N[v]$ and $N[y]$ for all $y\in V(G)\setminus \{v\}$ such that $N[y]\in \mathcal{B}$. Since $v\in T(N[v])$, $T$ satisfies the non-clashing condition for $N[v]$ and $N[y]$ for any $y\notin N[v]$. Otherwise, $y\in X\cap N(v)$. If $x\in T'(N[u])$ distinguishes $N[u]$ and $N[y]$, then $x\neq u$ since $u\in N[y]$ as $u$ and $v$ are false twins, and thus, $x$ is also in $T(N[v])$ and distinguishes $N[v]$ and $N[y]$. Otherwise, $x\in T'(N[y])$ distinguishes $N[u]$ and $N[y]$, but $x\neq v$, and thus, $x\in T(N[y])$ also distinguishes $N[y]$ and $N[v]$. Hence, $T$ is an NCTM of size at most $k$ for $\mathcal{B}$ in $G$.
\end{proof}

\begin{lemma}\label{lem:safe3}
    Reduction rule~\ref{rr3} is safe for \prob.
\end{lemma}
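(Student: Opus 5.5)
We prove both directions, writing $G':=G-v$ and $\mathcal{B}'$ for $\mathcal{B}$ restricted to $G'$. Since $N[v]\notin\mathcal{B}$, the concepts are the same sets of closed neighborhoods except that they lose the vertex $v$. The key observation is that $v$ and $u$ are false twins in $I$, so for every $x\in V(G)\setminus\{u,v\}$ we have $v\in N[x]$ if and only if $u\in N[x]$. Moreover, neither $N[u]$ nor $N[v]$ is in $\mathcal{B}$. Hence, among the centers $x$ of members of $\mathcal{B}$, the vertices $u$ and $v$ play identical roles as examples. Note also that deleting $v$ cannot make two distinct members of $\mathcal{B}$ coincide. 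Indeed, if $N[x]$ and $N[y]$ differ only in $v$, then they also differ in $u$, because neither $x$ nor $y$ equals $u$ or $v$.

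\textbf{Forward direction.} Let $T$ be an NCTM of size at most $k$ for $\mathcal{B}$. For each $N[w]\in\mathcal{B}$ with $v\in T(N[w])$, we replace $v$ by $u$: if $u\notin T(N[w])$ we add $u$, and otherwise we simply drop $v$. Call the result $T'$. The labels stay consistent: $u\in N[w]$ if and only if $v\in N[w]$, since $w\notin\{u,v\}$, so each example keeps the same label. The size does not increase. For non-clashing, suppose $v$ distinguished $N[x]$ and $N[y]$ under $T$. Since $x,y\notin\{u,v\}$, the vertex $u$ also distinguishes them, and $u$ now lies in the same teaching set that $v$ did. Every other distinguishing example is unchanged. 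Restricting $T'$ to $G'$ then gives an NCTM of size at most $k$ for $\mathcal{B}'$.

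\textbf{Reverse direction.} Let $T'$ be an NCTM for $\mathcal{B}'$ in $G'$. We use the same $T'$ in $G$, with teaching sets that contain no $v$. Labels are unchanged, since membership of $G'$-vertices in $N[x]$ is the same in $G$ as in $G'$. A vertex of $T'(N[x])\cup T'(N[y])$ distinguishing $N[x]\cap V(G')$ from $N[y]\cap V(G')$ also distinguishes $N[x]$ from $N[y]$ in $G$. So $T'$ is an NCTM for $\mathcal{B}$.

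The only delicate point is bookkeeping. We must check that no center of a member of $\mathcal{B}$ is $u$ or $v$, which holds by the rule's hypothesis. We must also make explicit that $\mathcal{B}'$ is in bijection with $\mathcal{B}$, which follows from the distinctness argument above. The rest is routine.
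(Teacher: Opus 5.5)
Your proof is correct and follows the paper's argument: in the forward direction you replace $v$ by $u$ in every teaching set, and in the reverse direction you reuse $T'$ unchanged. The paper defers the forward verification to the case analysis of Lemma~\ref{lem:safe2}, whereas you observe directly that no member of $\mathcal{B}$ is centered at $u$ or $v$, so $u$ and $v$ have identical membership in every concept of $\mathcal{B}$; this avoids the case analysis and also justifies the bijection between $\mathcal{B}$ and $\mathcal{B}'$, which the paper leaves implicit.
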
  

\begin{proof}
    Let $T$ be an NCTM of size at most $k$ for $\mathcal{B}$. For all $w\in V(G)\setminus \{v\}$ such that $v\in T(N[w])$, remove $v$ from $T(N[w])$, and if $u\notin T(N[w])$, then add $u$ to $T(N[w])$. From the proof that Reduction rule~\ref{rr2} is safe (Lemma~\ref{lem:safe2}), it is easy to see that this maintains that $T$ is an NCTM of size at most $k$ for $\mathcal{B}$ since $N[u]$ and $N[v]$ are not in $\mathcal{B}$. The reverse direction is trivial since any NCTM for $\mathcal{B}'$ in $G'$ is also an NCTM for $\mathcal{B}$ in $G$ as $N[u]$ and $N[v]$ are not in $\mathcal{B}$.
\end{proof}

\begin{proofthmvc} 
    Compute a vertex cover $X\subseteq~V(G)$ with the classic polynomial-time 2-approximation algorithm.
    By Lemma~\ref{lem:bound-k}, if $k\geq 2^{|X|+1}+|X|$, then it is a YES-instance. Otherwise, exhaustively apply Reduction rules~\ref{rr2}~and~\ref{rr3}, which do not alter $k$ and are safe~by Lemmas~\ref{lem:safe2} and~\ref{lem:safe3}. This is a kernelization algorithm: it takes polynomial time~and the resultant graph has at most $2^{|X|}(q+2k)+2^{|X|}+|X|\leq 2^{|X|}(  2^{2^{|X|}+|X|}+2^{|X|+2}+2|X|)+|X|$ vertices. 
\end{proofthmvc}

\section{Combinatorial Upper Bounds for Graph Classes}\label{sec:graph-classes}

We finish by providing upper bounds on the (positive) non-clashing teaching dimension for closed neighborhoods $\mathcal{B}$ of a graph $G$ when $G$ is restricted to planar or unit square graphs. Note that upper bounds for any $\mathcal{B}$ hold for any $\mathcal{B'}\subseteq \mathcal{B}$, and $\text{VCD}(\mathcal{B})\leq 4$ in both cases~\cite{DKP24}.

\begin{theorem}\label{thm:planar-pos}
For any planar graph $G$, if $\mathcal{B}=\{N[v]\mid v\in V(G)\}$, then NCTD$^+(\mathcal{B})\leq 7$.
\end{theorem}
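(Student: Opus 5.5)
The plan is to build the positive teaching map $T$ explicitly. Each teaching set $T(N[v])$ will be the union of three parts: the center $v$ itself, at most $3$ ``orientation witnesses'', and at most $3$ ``domination witnesses''. This gives $|T(N[v])|\leq 1+3+3=7$. Throughout, the only structure of planar graphs used is the exclusion of $K_5$ and $K_{3,3}$ as minors and its standard consequences: arboricity at most $3$, and the fact that no three vertices have three common neighbours.

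The first step handles non-adjacent centers. Put $v\in T(N[v])$ for every $v$. If $u\neq v$ are non-adjacent, then $v\notin N[u]$, so $v$ distinguishes $N[u]$ and $N[v]$. Hence only adjacent pairs $u,v$ with $N[u]\neq N[v]$ remain. Adjacent pairs with $N[u]=N[v]$ (true twins) give equal sets in $\mathcal{B}$, so they need nothing.

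The second step treats the adjacent pairs. Split them into incomparable pairs, where both $N[u]\setminus N[v]$ and $N[v]\setminus N[u]$ are non-empty, and comparable pairs, where say $N[u]\subsetneq N[v]$.
\begin{itemize}
\item For incomparable pairs, use Nash-Williams: a planar graph has arboricity at most $3$, so its edges can be oriented with out-degree at most $3$. For every edge oriented $v\to u$ whose pair is incomparable, add one vertex of $N[v]\setminus N[u]$ to $T(N[v])$. That vertex lies in $N[v]$, so positivity holds, and it distinguishes the pair.
\item Comparable pairs cannot be oriented freely, because the witness must come from the larger set $N[v]$. For each $v$, let $D(v)=\{u\in N(v) : N[u]\subsetneq N[v]\}$. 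It suffices to find $W_v\subseteq N(v)$ with $|W_v|\leq 3$ and $W_v\not\subseteq N[u]$ for every $u\in D(v)$. Then some $w\in W_v\setminus N[u]$ lies in $N[v]\setminus N[u]$ and distinguishes them.
\end{itemize}

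The third step, constructing $W_v$, is the main obstacle. Note that every $u\in D(v)$ has all its neighbours inside $N[v]$. The first attempt is to pick three vertices $w_1,w_2,w_3\in N(v)$, not all in $N[u]$ for any single $u\in D(v)$. A vertex $u\neq w_i$ with $\{w_1,w_2,w_3\}\subseteq N[u]$ is a common neighbour of $w_1,w_2,w_3$, and $v$ is one as well. By $K_{3,3}$-freeness, there is at most one further such common neighbour $u$ besides $v$. Similarly, $u=w_i$ forces $w_i$ to be adjacent to the other two.

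So the plan is to choose the triple carefully to kill these few exceptions, for instance as follows.
\begin{itemize}
\item Pick $w_1\in N(v)$, and choose $w_2\in N(v)\setminus N[w_1]$ whenever $w_1\in D(v)$ needs separating.
\item Then pick $w_3$ outside $N[u^*]$ for the possible unique bad common neighbour $u^*$. Such a $w_3$ exists because $u^*\in D(v)$ and $N[u^*]\neq N[v]$.
\item Finally, re-check that $w_3$ does not create a new exception. Here a $K_5$- or $K_{3,3}$-minor argument on $v$, the $w_i$ and the offending vertex should give a contradiction.
\end{itemize}
Small cases, such as $|N(v)|\leq 3$ or $D(v)$ being a star-like configuration, are settled by taking $W_v$ to be all of $N(v)$ when it is small, or a direct check.

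I expect this case analysis on $W_v$ to be the delicate part. If three witnesses do not suffice in some configuration, the fallback is to rebalance the budget. Every comparable pair is charged to the larger neighbourhood, and $u\in D(v)$ has degree confined to $N[v]$, which suggests using out-degree $\leq 2$ for most vertices and shifting slack to $W_v$. The target remains $|T(N[v])|\leq 7$.
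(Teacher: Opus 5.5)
\textbf{There is a genuine gap at your third step.} Your reduction itself is sound and differs from the paper's. Putting $v\in T(N[v])$ separates non-adjacent pairs. An out-degree-$3$ orientation handles incomparable adjacent pairs. Everything then reduces to a lemma: for each $v$ there is a set $W_v\subseteq N(v)$ with $|W_v|\le 3$ and $W_v\not\subseteq N[u]$ for every $u\in D(v)$. That lemma carries the whole difficulty, and you do not prove it.

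\emph{Your construction is circular.} You define $u^*$ as a common neighbour of $w_1,w_2,w_3$, so it does not exist before $w_3$ is chosen.

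\emph{Read as a common neighbour of $w_1,w_2$, it is not unique.} Two vertices of $N(v)$ can have two common neighbours $c_1,c_2$ inside $N(v)$, and then no $w_3$ may work. For instance, take a $4$-cycle $w_1c_1w_2c_2$, attach pendant vertices to $c_1$ and to $c_2$ (enough that $d(v)\ge 7$), and join $v$ to all of these vertices. The result is planar, since the graph without $v$ is outerplanar. Here $c_1,c_2\in D(v)$, and your rule allows $w_2\notin N[w_1]$. Since $N[c_1]\cup N[c_2]\supseteq N(v)$, every triple $\{w_1,w_2,w_3\}$ lies inside $N[c_1]$ or $N[c_2]$. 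So the choice of $w_1,w_2$ matters, and your rule does not make it.

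\emph{The lemma is false when $d(v)=4$.} In the wheel with hub $v$ and rim $abcd$, each $3$-subset of $N(v)$ lies in the closed neighbourhood of a rim vertex, and all rim vertices are in $D(v)$. So any proof must use the degree of $v$, which your sketch never does. The appeal to a minor argument that ``should give a contradiction'' and the fallback ``rebalancing'' do not fill this.

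\textbf{How to close it.} The lemma does hold once $d(v)\ge 5$.
Let $H=G[N(v)]$, which is outerplanar. Pick $x\in N(v)$ with at most two $H$-neighbours $x_1,x_2$. Among $u\in N(v)$, only $u\in\{x,x_1,x_2\}$ satisfy $x\in N[u]$.

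Suppose every triple through $x$ were contained in some $N[u]$ with $u\in D(v)$. Then every pair from $N(v)\setminus\{x\}$ lies in $N[x_1]$ or in $N[x_2]$ (counting only those $x_i\in D(v)$), or equals $\{x_1,x_2\}$. Since $|N(v)\setminus\{x\}|\ge 4$, this forces $N(v)\setminus\{x\}\subseteq N[x_1]\cup N[x_2]$.
\begin{itemize}
\item If one of the two traces contains the other, some $x_i$ has $N[x_i]\supseteq N[v]$, contradicting $x_i\in D(v)$.
\item Otherwise the only admissible cross pair is $\{x_1,x_2\}$. So $x_1,x_2$ are non-adjacent and both are adjacent to every vertex of $N(v)\setminus\{x,x_1,x_2\}$. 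Two such vertices $a_1,a_2$ give a $K_{3,3}$ with parts $\{v,x_1,x_2\}$ and $\{x,a_1,a_2\}$.
\end{itemize}
Combined with $T(N[v]):=N[v]$ whenever $d(v)\le 6$, this completes your route with budget $1+3+3$.

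\textbf{Comparison with the paper.} The paper avoids this lemma entirely. For $d(v)\le 6$ it also uses $N[v]$. For $d(v)\ge 7$ it takes any three neighbours and uses $K_{3,3}$-freeness to see that at most four closed neighbourhoods can still clash with $N[v]$: one further common neighbour and the three chosen vertices. It then pays one extra witness for each, for $3+4=7$.
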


\begin{proof}
    We construct a positive NCTM $T$ of size at most $7$ for $\mathcal{B}$. For all $v\in V(G)$, if $d(v)\leq 6$, then set $T(N[v]):=N[v]$. 
    Otherwise, for each $v\in V(G)$ with $d(v)\geq 7$, place any $3$ neighbors of $v$ in $T(N[v])$. Consider any one such vertex $v$ and let $v_1,v_2,v_3$ be its 3 neighbors in $T(N[v])$. Since $G$ is planar, there is at most one other vertex $u\in V(G)$ such that $v_1,v_2,v_3\in N(u)$, as otherwise there would be a $K_{3,3}$. Hence, $T$ satisfies the non-clashing condition for $N[v]$ and $N[x]$ for all $x\in V(G)\setminus \{v,u,v_1,v_2,v_3\}$ if $u$ exists, and for all $x\in V(G)\setminus \{v,v_1,v_2,v_3\}$ otherwise. 
    
    As long as there exists a pair $N[v],N[y]\in \mathcal{B}$ such that $d(v)\geq 7$, $N[v]\neq N[y]$, and $T$ does not currently satisfy the non-clashing condition for the pair, then there exists $w\in (N[v]\setminus N[y]) \cup (N[y]\setminus N[v])$ that will be added to $T(N[v])$ or $T(N[y])$ to distinguish the pair. Specifically, if $w\in N[v]\setminus N[y]$, then add $w$ to $T(N[v])$, and otherwise, add $w\in N[y]\setminus N[v]$ to $T(N[y])$. In either case, $T$ then satisfies the non-clashing condition for $N[v]$ and $N[y]$. Once this process is completed, $T$ is a positive NCTM for $\mathcal{B}$ as it is non-clashing for $N[v]$ and $N[y]$ for any vertex $v$ with $d(v)\geq 7$, but also for the closed neighborhoods of any two vertices of degree at most $6$ since their teaching sets coincide with their closed neighborhoods in this case. For each $v\in V(G)$ with $d(v)\geq 7$, as there were at most $4$ possibilities for $y$ (the vertices corresponding to $u$, $v_1$ , $v_2$, and $v_3$ above), then at most $4$ vertices were added to any teaching set during this process. Thus, $T$ has size at most $7$.
\end{proof}

\begin{theorem}\label{thm:planar}
For any planar graph $G$, if $\mathcal{B}=\{N[v]\mid v\in V(G)\}$, then NCTD$(\mathcal{B})\leq 5$.
\end{theorem}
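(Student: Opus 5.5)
The plan is a direct construction of a (not necessarily positive) teaching map that uses one ``self'' example per concept plus one example per edge of a low out-degree orientation of $G$. Negative examples are what make this work, as opposed to the argument of Theorem~\ref{thm:planar-pos}. The point is that a single example $w$ in the symmetric difference $N[u]\triangle N[v]$ may be placed in the teaching set of \emph{either} concept, labeled according to that concept. So we are free to decide, for each pair that needs one, which of the two concepts ``pays'' for it.

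\textbf{Step 1: non-adjacent pairs.} For every $v\in V(G)$, put $v$ into $T(N[v])$ with a positive label. If $u\neq v$ are non-adjacent, then $v\in N[v]\setminus N[u]$, so $v$ distinguishes $N[v]$ and $N[u]$. Hence the only pairs still to be handled are $N[u],N[v]$ with $uv\in E(G)$ and $N[u]\neq N[v]$. Pairs with $N[u]=N[v]$ are the same concept and need nothing.

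\textbf{Step 2: orientation.} Orient the edges of $G$ so that every vertex has out-degree at most $3$. This is possible because a planar graph on $n'\geq 3$ vertices has at most $3n'-6$ edges, and the same bound applies to every subgraph. By Nash-Williams' theorem (or Hakimi's orientation theorem), $G$ therefore has arboricity at most $3$. Orienting each of the at most three forests towards a root gives every vertex out-degree at most $3$.

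\textbf{Step 3: adjacent pairs.} For each arc $v\to u$ with $N[u]\neq N[v]$, pick any $w\in N[u]\triangle N[v]$ and add $w$ to $T(N[v])$. It is labeled positive if $w\in N[v]$ and negative otherwise. This label is consistent with $N[v]$, and $w$ distinguishes $N[u]$ and $N[v]$, as required by the non-clashing condition.

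\textbf{Conclusion.} Every adjacent pair is handled by its unique arc, and every non-adjacent pair by Step~1. Each teaching set contains $v$ together with at most one example per out-arc, so $|T(N[v])|\leq 1+3=4\leq 5$. This gives NCTD$(\mathcal{B})\leq 5$.

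The only step requiring an external result is Step 2; the main obstacle is noticing that degeneracy alone is not enough. A $5$-degeneracy ordering only yields out-degree $5$, and hence size $6$, whereas the arboricity bound gives out-degree $3$. If I wanted to avoid Nash-Williams, I would instead derive the out-degree bound from Hakimi's criterion, namely $|E(H)|\leq 3|V(H)|$ for every subgraph $H$, which holds by Euler's formula.
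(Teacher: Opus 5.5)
Your proof is correct, and it takes a genuinely different route from the paper's. In fact it establishes the stronger bound NCTD$(\mathcal{B})\le 4$.

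The paper argues locally with forbidden subgraphs. A vertex of degree at most $4$ gets $T(N[v]):=N[v]$, while a vertex $v$ of degree at least $5$ gets three neighbors $v_1,v_2,v_3$ as positive examples. By $K_{3,3}$-freeness, $v_1,v_2,v_3$ have at most one other common neighbor $u$, so only $N[u],N[v_1],N[v_2],N[v_3]$ can still clash with $N[v]$. If $u\notin N[v]$, the single negative example $u$ settles all of them. Otherwise, $K_5$-freeness of $\{u,v,v_1,v_2,v_3\}$ gives a non-edge, say $v_2v_3$, leaving at most two repair examples, for $3+2=5$.

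You replace this with a global charging scheme. The positive self-example disposes of every non-adjacent pair at once, so the only pairs left are edges with distinct closed neighborhoods. An orientation of out-degree at most $3$ (from $|E(H)|\le 3|V(H)|-6$ via Nash-Williams or Hakimi) then charges each such edge to one endpoint, which may have to show the distinguishing vertex as a negative example.

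The paper's argument is self-contained and parallels its proof for the positive variant. Yours imports an orientation theorem, but it buys three things:
\begin{itemize}
\item the better constant $4$, which bears on the paper's closing question of whether $5$ is tight;
\item no case analysis at all;
\item a verbatim generalization: whenever $G$ has an orientation of out-degree at most $d$ (for instance, when $G$ has arboricity at most $d$ or is $d$-degenerate), its closed neighborhoods have NCTD at most $d+1$.
\end{itemize}
Avoiding case analysis is a real advantage here. In the paper's version, when no such $u$ exists the $K_5$ step supplies no non-edge among $v_1,v_2,v_3$. So the three neighbors should be chosen not to form a triangle, since otherwise three repair examples may be needed.

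One bookkeeping point. Teaching sets are attached to concepts, not vertices, and adjacent true twins (e.g.\ two adjacent vertices with the same four other neighbors) satisfy $N[u]=N[v]$. Read literally, your construction pours $u$, $v$, and the out-arc examples of both into the single set $T(N[u])=T(N[v])$, which can exceed $4$. The fix is to choose one center $v_C$ per concept $C$ and let $T(C)$ consist of $v_C$ together with the examples charged to out-arcs of $v_C$ only. For distinct concepts, the chosen centers are either non-adjacent, so the self-example works, or joined by an edge whose tail is one of them. Your argument then goes through unchanged with $|T(C)|\le 4$. The paper's proof is equally terse on this point.
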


\begin{proof}
    We construct an NCTM $T$ of size at most $5$ for $\mathcal{B}$. For all $v\in V(G)$, if $d(v)\leq 4$, then set $T(N[v]):=N[v]$. 
    Otherwise, for each $v\in V(G)$ with $d(v)\geq 5$, place any $3$ neighbors of $v$ in $T(N[v])$. Consider any one such vertex $v$ and let $v_1,v_2,v_3$ be its 3 neighbors in $T(N[v])$. Then, as in the proof of Theorem~\ref{thm:planar-pos}, there is at most one other vertex $u\in V(G)$ such that $v_1,v_2,v_3\in N(u)$, and hence, $T$ satisfies the non-clashing condition for $N[v]$ and $N[x]$ for all $x\in V(G)\setminus \{v,u,v_1,v_2,v_3\}$ if $u$ exists, and for all $x\in V(G)\setminus \{v,v_1,v_2,v_3\}$ otherwise. If $u\notin N[v]$, then add $u$ to $T(N[v])$, and note that this ensures that $T$ satisfies the non-clashing condition for $N[v]$ and $N[z]$ for all $z\in V(G)\setminus \{v\}$ since $u\notin N[v]$, but $u\in N[v_1]$, $u\in N[v_2]$, and $u\in N[v_3]$. Hence, in what follows, if $u$ exists, then assume that $u\in N[v]$.

    As long as there exists a pair $N[v],N[y]\in \mathcal{B}$ such that $d(v)\geq 5$, $N[v]\neq N[y]$, and $T$ does not currently satisfy the non-clashing condition for the pair, then there exists $w\in (N[v]\setminus N[y]) \cup (N[y]\setminus N[v])$ that will be added to $T(N[v])$ or $T(N[y])$ to distinguish the pair. Specifically, if $w\in N[v]\setminus N[y]$, then add $w$ to $T(N[v])$, and otherwise, add $w\in N[y]\setminus N[v]$ to $T(N[y])$. In either case, $T$ then satisfies the non-clashing condition for $N[v]$ and $N[y]$. Once this process is completed, $T$ is an NCTM for $\mathcal{B}$ as it is non-clashing for $N[v]$ and $N[y]$ for any vertex $v$ with $d(v)\geq 5$, but also for the closed neighborhoods of any two vertices of degree at most $4$ since their teaching sets coincide with their closed neighborhoods in this case. We now argue that there were at most $2$ possibilities for $y$ above. The vertices in $\{u,v,v_1,v_2,v_3\}$ cannot form a $K_5$ since $G$ is planar. Thus, one of the edges $v_1v_2$, $v_1v_3$, and $v_2v_3$ is not in $E(G)$ since $uv\in E(G)$. Say, w.l.o.g., that $v_2v_3\notin E(G)$. Then, $y$ could only have been $u$ or $v_1$. Hence, at most $2$ vertices were added to any teaching set during the above process, and thus, $T$ has size at most $5$.
\end{proof}

We end with \textit{unit square graphs}: the intersection graphs of axis-parallel unit squares in the~plane.

\begin{theorem}\label{thm:usi}
For any unit square graph $G$, if $\mathcal{B}=\{N[v]\mid v\in V(G)\}$, then NCTD$^+(\mathcal{B})\leq 4$.
\end{theorem}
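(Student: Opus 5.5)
We construct an explicit positive teaching map from the geometry. Fix a unit square representation of $G$: each vertex $v$ is an axis-parallel unit square with center $p_v=(x_v,y_v)$, and $uv\in E(G)$ if and only if $\|p_u-p_v\|_\infty\leq 1$. Thus $N[v]$ is the set of vertices whose centers lie in the closed $\ell_\infty$-ball of radius $1$ around $p_v$, a $2\times 2$ axis-parallel square $S_v$.

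For each $v$, let $T(N[v])$ consist of four extreme vertices of $N[v]$. These are a vertex of $N[v]$ with maximum $x$-coordinate, one with minimum $x$-coordinate, one with maximum $y$-coordinate, and one with minimum $y$-coordinate (ties broken arbitrarily; $v$ itself may be repeated). This map is positive and has size at most $4$.

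To verify the non-clashing condition, suppose $N[u]\neq N[v]$ but neither teaching set distinguishes them. Then $T(N[u])\subseteq N[v]$ and $T(N[v])\subseteq N[u]$. Let $R_u$ be the bounding box of the centers in $N[u]$; it is determined by the four points of $T(N[u])$. Since these four points lie in $S_v$, which is an axis-parallel box, we get $R_u\subseteq S_v$. Hence every center of $N[u]$ lies in $S_v$, so $N[u]\subseteq N[v]$. Symmetrically, $N[v]\subseteq N[u]$, so $N[u]=N[v]$, a contradiction.

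The key step, and the only place geometry enters, is that the bounding box of a point set is determined by its four extreme points and that any axis-parallel box containing those four points contains the whole set. Because closed neighborhoods in a unit square graph are exactly the point sets cut out by axis-parallel boxes, this holds here.

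The main obstacle is confirming that the correspondence between adjacency and $\ell_\infty$ distance is exact, including boundary tangency conventions for closed squares; I would state it as a standard fact. It also gives the NCTD bound of $4$ in the table for free.
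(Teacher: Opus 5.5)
Your proof is correct and takes essentially the same approach as the paper. Both use the same teaching map, namely the leftmost, rightmost, topmost, and bottommost squares of $N[v]$, and both rely on the fact that $N[v]$ is exactly the set of squares cut out by an axis-parallel box. The paper phrases the check by comparing the enclosing rectangles $R(u)$ and $R(v)$ of the squares, while you argue one-sidedly with centers and the $\ell_\infty$-box $S_v$; this difference is only cosmetic.
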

\begin{proof} 
    We construct a positive NCTM $T$ of 
    size at most $4$ for $\mathcal{B}$. 
    For all $v\in V(G)$, let $R(v)$ be the minimum rectangle enclosing $N[v]$ in the plane (see Fig.~\ref{fig:square}). 
    If a square $u$ does not intersect $v$, then for one of the directions left, right, up, or down, $u$ is farther in that direction than a square contained in $R(v)$ that is the farthest in that direction. Thus, $u$ is not contained in $R(v)$, and each~square in $R(v)$ intersects $v$. For all $v\in V(G)$, $T(N[v])$ consists of the vertices represented by the leftmost, rightmost, topmost, and bottommost squares in $R(v)$ (they may not be unique). Consider any $u\in V(G)\setminus \{v\}$. If $R(u)\neq R(v)$, then there exists $x\in T(N[u])\cup T(N[v])$ that distinguishes $N[u]$ and $N[v]$. Indeed, w.l.o.g., say that part of $R(u)$ is farther to the left than all of $R(v)$; then, $x\in T(N[u])$ is the leftmost square in $R(u)$, and $x\notin N[v]$ as part of $R(u)$ is farther to the left than all of $R(v)$. If $R(u)=R(v)$, then $N[u]=N[v]$ as each square in $R(v)$ ($R(u)$, resp.) intersects $v$ ($u$, resp.).
\end{proof}
\begin{figure}[h!]
    \centering
    \includegraphics[trim={0.45cm 0.42cm 0.45cm 0.33cm},clip,width=0.36\linewidth]{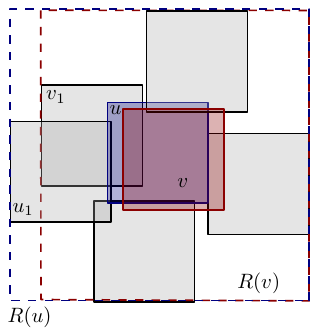}
    \caption{In the proof of Theorem~\ref{thm:usi}, the minimum rectangles $R(u)$ (blue) and $R(v)$ (red) enclosing $N[u]$ and $N[v]$, respectively. The leftmost square in $R(u)$ ($u_1\in T(N[u])$) is not contained in $R(v)$.}
    \label{fig:square}
\end{figure}

\section{Conclusion}

We studied non-clashing teaching for closed neighborhoods in graphs, notably providing highly non-trivial \FPT\ algorithms and strong algorithmic lower bounds. Several open directions remain for future work, such as exploring the complexity with respect to treewidth or planarity for \probplus~and \prob, as well as vertex integrity and treedepth for \prob. However, like \textsc{Non-Clash}~\cite{GKMR25}, we expect that they are \W[1]-hard parameterized by the treewidth. It is also important to determine whether Theorem~\ref{thm:planar} is tight, since the balls of a planar graph has been mentioned~\cite{ChalopinCIR24} as a potential concept class~$\mathcal{C}$ negatively answering the question of~\cite{KSZ19,FKS23}, i.e., such that $\text{NCTD}(\mathcal{C})>\text{VCD}(\mathcal{C})$.

\section*{Acknowledgments}
This work was partially supported by the Austrian Science Fund (FWF) [10.55776/Y1329 and 10.55776/COE12], the WWTF Vienna Science and Technology Fund (Project 10.47379/ICT22029), the \includegraphics[width=0.5cm]{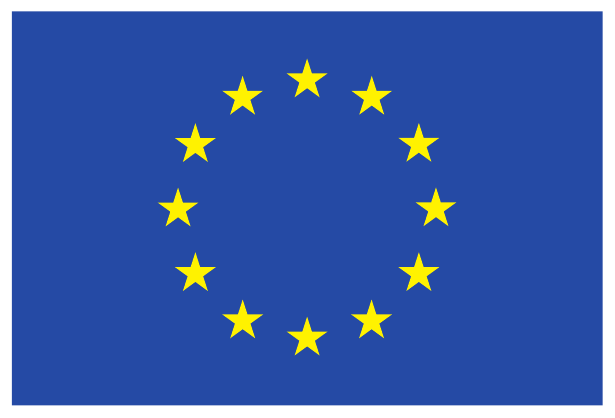} European Union's Horizon 2020 research and innovation COFUND programme LogiCS@TUWien (grant agreement No 101034440), and the Smart Networks and Services Joint Undertaking (SNS JU) under the European Union's Horizon Europe and innovation programme under Grant Agreement No. 101139067 (ELASTIC). Views and opinions expressed are however those of the authors only and do not necessarily reflect those of the European Union (EU). Neither the EU nor the granting authority can be held responsible for them.

\bibliography{bib-NNCTD}
\bibliographystyle{alpha}

\end{document}

%% file: fig/the_graph_with_TM.tex
\begin{tikzpicture}
\node[wcirc, label=above:$t_1$] (t1) at (5, 12.5) {};
\node[wcirc, label=above:$f_1$] (f1) at (5, 11.5) {};
\node[circ, label=below:$v_1$] (v1) at (7, 10.5) {};

\node[wcirc, label=above:$t_2$] (t2) at (5, 9) {};
\node[wcirc, label=above:$f_2$] (f2) at (5, 8) {};
\node[circ, label=below:$v_2$] (v2) at (7, 7) {};

\node[wcirc, label=above:$t_3$] (t3) at (5, 5.5) {};
\node[wcirc, label=above:$f_3$] (f3) at (5, 4.5) {};
\node[circ, label=below:$v_3$] (v3) at (7, 3.5) {};

\node[wcirc, label=above:$t_4$] (t4) at (5, 2) {};
\node[wcirc, label=above:$f_4$] (f4) at (5, 1) {};
\node[circ, label=below:$v_4$] (v4) at (7, 0) {};

\node[wcirc, label=below:$c_{1,1}$] (c11) at (3, 9.5) {};
\node[wcirc, label=below:$c_{1,2}$] (c12) at (3, 7.35) {};
\node[wcirc, label=below:$c_{1,3}$] (c13) at (3, 4.5) {};
\node[wcirc, label=below:$c_{2,2}$] (c22) at (3, 6.7) {};
\node[wcirc, label=below:$c_{2,3}$] (c23) at (3, 3.7) {};
\node[wcirc, label=below:$c_{2,4}$] (c24) at (3, 1) {};

\node[circ, label=below:$v_0$] (v0) at (7, 14) {};

\node[circ, label=above:$c_{1}$] (c1) at (0, 7) {};
\node[circ, label=above:$c_{2}$] (c2) at (0, 5.5) {};

\draw (c1) -- (t1) (c1) -- (f2) (c1) -- (t3) (c1) -- (c11) (c1) -- (c12) (c1) -- (c13)
(c2) -- (t2) (c2) -- (f3) (c2) -- (f4) (c2) -- (c22) (c2) -- (c23) (c2) -- (c24)
(c11) -- (v1) (c12) -- (v2) (c13) -- (v3) (c22) -- (v2) (c23) -- (v3) (c24) -- (v4);

\foreach \v in {1, 2, 3, 4}
\draw (t\v) -- (v\v) -- (f\v);

\foreach \y/\n in {.5/4, 4/3, 7.5/2, 11/1, 14.5/0}{
\node[circ, label=right:{\scriptsize$v_{\n}^3$}] (\y3) at (9.1, \y-.5) {};
\node[circ, label=right:{\scriptsize$v_{\n}^2$}] (\y2) at (9.8, \y+.1) {};
\node[circ, label=right:{\scriptsize$v_{\n}^1$}] (\y1) at (9.1, \y+.1) {};
\node[circ, label=right:{\scriptsize$v_{\n}^4$}] (\y4) at (9.8, \y-.5) {};
\node[wcirc, label=right:{\scriptsize$v_{\n}^0$}] (\y0) at (9.5, \y+.6) {};
\node[circ, label=right:$v_{\n}^{\star}$] (b\n) at (9.25, \y+1.5) {};
\draw (8.9, \y) -- (v\n) -- (b\n) -- (9.65, \y+.9);
\node (O\n) at (9.65, \y) {
	\tikz {\draw[color = gray, dotted, rounded corners] (9.65, \y) rectangle ++(1.5,1.75);}
};
\node[label=right:{$\mathcal{V}_{\n}$}] (l\y) at (10.3, \y-.6) {};
}
\node (TMVi0) at (7.2, 17.1) {\scriptsize{\bf Step 1:} by Lemma~\ref{lem:twins}, $\forall i\in \{0\}\cup [n]$,  $T(N[v_i^p])= \{v_i^p\}$ for some $p\in [4]$};
\node (TMVi) at (6, 16.7) {\scriptsize{\bf Step 2:} by Step 1, $\forall i\in \{0\}\cup [n]$, $T(N[v_i^{\star}])\subset \mathcal{V}_i$};
\node (TMV00) at (4.6, 15.7) {\scriptsize{\bf Step 3:} by Step 2, $T(N[v_0])$ is a clause or variable vertex};
\node (TMvi) at (4.68, 15.3) {\scriptsize{\bf Step 4:} by Steps 2 and 3, $\forall i\in[n]$, $T(N[v_i])\subset\{t_i, f_i\}$};
\node (Tci10) at (0.9, 13.7) {\scriptsize{\bf Step 5:} by Step 4, $\forall j\in [m]$, if $c_j =(x\vee x'\vee x'')$,};
\node (Tci10') at (1.05, 13.3) {\scriptsize then $T(N[c_j])\subset\{c_{j,x}, c_{j, x'}, c_{j, x''}\}$};
\node (Tci10'') at (0.95, 12.9) {\scriptsize (however, some exceptions are possible)};
\node (FC) at (0, 6.25) {
	\tikz {\draw[color = gray, ultra thick, rounded corners] (0, 0) rectangle ++(1.5,4);}
};
\node (FV) at (7, 7) {
	\tikz {\draw[color = gray, ultra thick, rounded corners] (0, 0) rectangle ++(1.5,15.25);}
};
\node (FCC) at (3, 5.25) {
	\tikz {\draw[color = gray, dashed, rounded corners] (0, 0) rectangle ++(1.5,10);}
};
\node (FVV) at (9.9, 7.875) {
	\tikz {\draw[color = gray, dashed, rounded corners] (0, 0) rectangle ++(2.5,17);}
};

\end{tikzpicture}

%% file: fig/reduction_with_TM.tex
\begin{tikzpicture}
\node[wcirc, label=left:$t_1$] (t1) at (0.25, 4.25) {};
\node[wcirc, label=left:$f_1$] (f1) at (0.25, 5.75) {};
\node[circ, label=left:$v_1$] (v1) at (1, 5) {};
\node[circ, label=right:$v'_1$] (v'1) at (1.5, 5) {};

\node[wcirc, label=left:$t_2$] (t2) at (2.9, 5) {};
\node[wcirc, label=left:$f_2$] (f2) at (3.75, 5.5) {};
\node[circ, label=below left:$v_2$] (v2) at (4, 5) {};
\node[circ, label=right:$v'_2$] (v'2) at (4.5, 5) {};

\node[wcirc, label=left:$f_3$] (t3) at (6, 5) {};
\node[wcirc, label=left:$t_3$] (f3) at (6.75, 4.5) {};
\node[circ, label=above left:$v_3$] (v3) at (7, 5) {};
\node[circ, label=right:$v'_3$] (v'3) at (7.5, 5) {};

\node[wcirc, label=below:$t_4$] (t4) at (9, 5) {};
\node[wcirc, label=left:$f_4$] (f4) at (9.75, 5.5) {};
\node[circ, label=below left:$v_4$] (v4) at (10, 5) {};
\node[circ, label=right:$v'_4$] (v'4) at (10.5, 5) {};

\node[wcirc, label=right:$c_{1,1}$] (c11) at (1.25, 4) {};
\node[wcirc, label=right:$c_{2, 1}$] (c12) at (1.25, 6) {};
\node[wcirc, label=right:$c_{1,2}$] (c21) at (4.25, 4) {};
\node[wcirc, label=left:$c_{2,2}$] (c22) at (4.25, 6) {};
\node[wcirc, label=right:$c_{2, 3}$] (c32) at (7.25, 6) {};
\node[wcirc, label=right:$c_{1, 4}$] (c41) at (10.25, 4) {};

\draw (t1) -- (v1) -- (f1) (v1) -- (v'1) -- (c12) -- (v1) -- (c11) -- (v'1)
(t2) -- (v2) -- (f2) (v2) -- (v'2) -- (c22) -- (v2) -- (c21) -- (v'2)
(t3) -- (v3) -- (v'3) -- (c32) -- (v3) -- (f3)
(t4) -- (v4) -- (v'4) -- (c41) -- (v4) -- (f4);

\node[circ, label=below:$c_{1}$] (c1) at (3.25, 2.75) {};
\node[circ, label=below:$c'_{1}$] (c'1) at (1.75, 3) {};
\node[circ, label=above:$c_{2}$] (c2) at (3.25, 7.25) {};
\node[circ, label=above:$c'_{2}$] (c'2) at (1.75, 7) {};

\draw (c2) -- (f1) (c2) -- (c12) (c2) -- (t2) (c2) -- (c22) (c2) -- (t3) (c2) -- (c32) (c2) -- (c'2)
(c1) -- (c'1) (c1) -- (t1) (c1) -- (c11) (c1) -- (t2) (c1) -- (c21) (c1) -- (t4) (c1) -- (c41);

\node (TMVi) at (7.4, 7.85) {\scriptsize \textbf{Step 1:}
$\forall i\in [n]$, $T(N[v_i])\subset \{t_i,f_i\}$ to distinguish $N[v_i]$ \& $N[v_i']$}; 

\node (TMVi) at (7.8, 7.4) {\scriptsize \textbf{Step 2:}
$\forall j\in [m]$, $T(N[c_j])\subset N(c_j)\setminus \{c_j'\}$ to distinguish $N[c_j]$ \& $N[c_j']$};


\end{tikzpicture}